 \newcommand{\HC}{\mathcal{H}}
 \newcommand{\Tr}{{\rm Tr}}
\renewcommand{\geq}{\geqslant}
\renewcommand{\leq}{\leqslant}
\newcommand{\mOmega}{\mathbf \Omega}
\newcommand{\mOt}{\mathscr O^t}
\newcommand{\mO}{\mathscr O}
\newcommand{\bs}{\textsf{BS}}
\newcommand{\mI}{\mathbb I}
\newcommand{\mX}{\mathbb X}
\def\be{\begin{equation}}
\def\ee{\end{equation}}
\def\bs{\begin{split}}
\def\e{\end{split}}
\def\ba{\begin{eqnarray}}
\def\bea{\begin{eqnarray}}
\def\tea{\end{eqnarray}}
\def\ea{\end{eqnarray}}
\def\eea{\end{eqnarray}}
\newtheorem{corollary}{Corollary}
\newtheorem{proposition}{Proposition}
\newtheorem{definition}{Definition}
\begin{document}

\preprint{LA-URed}

\title{Cycle equivalence classes, orthogonal Weingarten calculus,\\ and the mean field theory of memristive systems}
\thanks{email: caravelli@lanl.gov.}%

\author{F. Caravelli}
\affiliation{Theoretical Division (T4), Los Alamos National Laboratory, Los Alamos, New Mexico 87545, USA
}%

\date{\today}

\begin{abstract}
  It has been recently noted that for a class of dynamical systems with explicit conservation laws represented via projector operators, the dynamics can be understood in terms of lower dimensional equations. This is the case, for instance, of memristive circuits. Memristive systems are important classes of devices with wide-ranging applications in electronic circuits, artificial neural networks, and memory storage. We show that such mean-field theories can emerge from averages over the group of orthogonal matrices, interpreted as  cycle-preserving transformations applied to the projector operator describing Kirchhoff's laws.  Our results provide insights into the fundamental principles underlying the behavior of resistive and memristive circuits and highlight the importance of conservation laws for their mean-field theories. In addition, we argue that our results shed light on the nature of the critical avalanches observed in quasi-two dimensional nanowires as boundary phenomena.

\end{abstract}

\maketitle


\section{Introduction}

 With the increasing interest in developing neuromorphic computers, it is crucial to understand physical devices that exhibit similar structures and functionalities to biological neural networks.  The nonlinear interactions and complex connectivity of biological neuronal networks are well-known characteristics \cite{ndbook}.  Yet, it is still a mystery why low-dimensional representations of certain rather complex dynamical systems exist \cite{lowdim1,lowdim2} even if in certain regimes. This study sheds light on the existence of such representations in the context of memristive circuits, which we use as a toy model for more complex neuromorphic systems \cite{mead}. 
 
 Circuits composed of nanodevices with memory are at the forefront of neuromorphic computing research, as their behavior often mimics synaptic plasticity observed in biological neuronal circuits. However, a comprehensive theory that effectively describes the behavior of these circuits is currently lacking. Memristive devices are a promising area of research for the development of next-generation computing systems \cite{stru,reviewCarCar}. These devices are passive 1-port resistive components that have the ability to remember past voltages and currents and can change their resistance based on the history of the input signals \cite{chua71,chua76a}. The experimental existence of switching in physical systems dates back to the late 60's \cite{argall}, but the connection to memristive behavior has been made just over a decade ago \cite{stru08,memr1,memr2}. One peculiar feature of the memristive devices is that their resistance changes between two limiting values $R_{on}\leq R_{off}$ (or analogously conductance value).
 The development of circuits of memristive devices has become an important area of research, as it enables the creation of neuromorphic devices that can support the existing von Neumann architecture \cite{DiVentraPershin,reviewCarCar,stru} in a variety of tasks more prone to an analog computing approach. From an experimental perspective,
nanowire networks have emerged as a promising material for the fabrication of disordered memristive networks. They exhibit reversible resistance switching behavior when subjected to an external electric field, making them ideal for use in memristive devices. Additionally, silver nanowires are low-cost, have high aspect ratios, and can be synthesized using a variety of techniques, making them highly versatile. The avoided crossings between the nanowires act as tunneling junctions \cite{nakayama,Zdenkaadvphys}, and for coated silver nanowires the phenomenon of filament formation is the main driver between the memristive effects that emerge \cite{AtomicSwitch1,AtomicSwitch2,nakayama,milano2020}. 
 Their behavior is particularly similar to the behavior of neuronal circuits, first and foremost, their connectivity strongly resembles the one of a neuronal connectome. In addition, the formation of filaments and their effective memristive behavior strongly parallels the plasticity of neuronal circuits. In this respect, then, memristive networks are an area of research in Physics that parallels the study of neuronal networks in biology. For instance, memristive networks have Lyapunov functions \cite{CaravelliEntropy} similar to recurrent neural networks \cite{hopfieldtank}. Studying memristive circuits can provide important insights into the non-trivial dynamics of large assemblies of neuronal networks.

As an example of this, it became clear only recently that certain disordered circuits with memory exhibit a certain regularity \cite{chua71,stru08,Caravelli2016rl,Caravelli2016ml,Caravelli2019} in their dynamical behavior as the applied voltage is constant in time. 
This should be somewhat surprising, given that these are rather nonlinearly interacting systems, characterized by conservation laws that lead to effective nonlocal interactions, although exponentially bounded \cite{Caravelli2017}. Such mean field techniques have also been tested in experimentally viable systems such as nanowire connectomes \cite{caravelliadp}. Although it is very well-known that circuits composed of purely memristive devices must also be memristive  in 2-probe experiments \cite{chua76a}, it is less obvious why the collective behavior should be so much similar to a single device, in particular when the network of devices is disordered. This observation is not only numerical or model-based, but it has been recently shown to be true also in the experimental setup of silver nanowires \cite{caravelliadp}, where a mean field theoretical description of a disordered network of nanowires could fit the potentiation and depression of the conductance.
Following these results, the author proposed that there is a class of dynamical systems that, thanks to the presence of projector operators in their dynamics, have a well-defined mean field theory \cite{caravelliPEDS}. However, rigorous results could be obtained only for simple projector operators.

On a side note, understanding the phenomenology of memristive networks has direct applications for the employment of these physical systems as reservoir computing devices at the edge of chaos, an idea proposed in the early 90's
\cite{packard,langton}, which has seen a direct application in memristive circuits \cite{res1,res2,carroll,sheldonmemlrc}. In fact, memristive systems seem to perform better near critical transitions \cite{memorynanw,baccetti}. In the case of the brain, the idea that the brain is in a constant critical state has been proposed in the mid 90's \cite{bak1,bak2,bak3,chialvo4,chialvomem,jensen_1998}.

However, in the case of memristive circuits, one can have both first and second-order dynamical transitions in the conductance state. Avalanches of memristive switching are a phenomenology also discussed in the literature \cite{avizienis,hochstetteretal,brownavalanche} both theoretically and experimentally. However, it is worth pointing out that these switching phenomena are strongly due to what we wish to call here \textit{boundary} phenomena: they depend on how the memristive device approaches dynamically the boundary of their resistive or conductance values. We call instead a \textit{bulk} phenomenon the switching is purely due to the presence of hysteresis, e.g. a pinched hysteresis loop. Boundary phenomena are based on a rapid switching of memristive devices between two resistive states. Bulk phenomena are instead due only to the presence of memory.
This is for instance the case of the recent symmetry-breaking transition of \cite{caravelliscience}. There,  a first order transition in the conductance state is purely due to the presence of hysteresis, rather than the phenomenology of the boundary of the device. In fact, the presence or not of this transition depends only on the ratio $r=R_{off}/R_{on}$, which at present is the best indicator of a bulk-induced transition. Previously, a mean-field theory for this type of transition was developed in a series of works \cite{caravelliscience,caravelliadp,caravelliPEDS}. Other mean field theories have appeared in the literature \cite{Caravelli_2018,CaravelliEntropy,pershinisc}, describing the Ising-like behavior.
The present manuscript is an attempt of overcoming some of the limitations of previous works, incorporating the circuit properties into mean field theory. Bulk 
in deriving a mean-field theory for memristive network bulk transitions, by averaging only in a subclass with the same number of cycles. This is important for experimental reasons. We know for instance that in quasi-two dimensional materials that these second order transitions are only a cross over \cite{caravelliadp}. One reason is that the mean field theoretical results imply implicitly the homogeneity of the network. We thus wish to go beyond this approximation, applying techniques that keep track of the density of cycles in the network. This result is important, even if for a toy model, to develop techniques to explain the phenomenology of more realistic systems.

The paper is organized as follows. First, we establish a connection between cycle-preserving transformations and the group of orthogonal transformations, introducing an equivalence relation on the cycle space. We then utilize the Weingarten calculus to average over orthogonal transformations, leading to a mean-field theory that extends previous results and renormalizes the ratio $r$. We show that the cycle density directly affects the transition from low to high conductance. Conclusions follow.

\section{Cycle equivalence classes and memristive circuits}

\subsection{Circuits, Kirchhoff's laws, and projector operators}

In circuits, Kirchhoff's laws are manifestations of the conservation of physical quantities such as charge and energy. Mathematically, these can be expressed via the introduction of projection operators \cite{bollobas2012graph, Caravelli2016rl}, i.e. matrices $\boldsymbol{\mOmega}$ satisfying the constraint $\boldsymbol{\mOmega}^2=\boldsymbol{\mOmega}$, and directly connected to circuit topology. For instance, for a resistive circuit made of identical resistances of value $r$ in series with voltage generators in series, Ohm's law for the network can be expressed as
\begin{eqnarray}
    \vec i=\frac{1}{r}\boldsymbol{\mOmega} \vec s,
    \label{eq:rescirc}
\end{eqnarray}
where $\vec s$ is the collection of voltage generators connected in series to each resistance, while $\vec i$ contains the branch currents. The value of the current is connected to the effective graph resistance \cite{effgr}. For a circuit of identical resistors, if one picks two nodes $a,b$ and applies a unital voltage, the effective resistance $R(a,b)$ is equivalent to the inverse of the effective current that flows through the generator. The effective graph resistance is $R_{eff}= \sum_{(a,b)} R_{ab}.$ The graph resistance can be evaluated via the graph Laplacian, defined as the matrix $Q_{ab}=1$ if $a=b$, $-1$ if $(a,b)$ is an edge of the graph and 0 otherwise. Then, if $\vec e_a$ is the vector with all zeros and $1$ in position $a$, we have that
\begin{eqnarray}
    R(a,b)=(\vec e_a-\vec e_b)^t Q^{-1}(\vec e_a-\vec e_b)\label{eq:efflap}
\end{eqnarray}
where $Q^{-1}$ is intended as the pseudo-inverse of $Q$, e.g. the matrix obtained by diagonalizing the matrix, and inverting the diagonal matrix of eigenvalues for only the non-zero eigenvalues. It follows that $R_{eff}= V\sum_{i=2}^V \mu_i^{-1}$, $\mu_i$ are the eigenvalues of $Q$. It is known using the spectrum of the Laplacian that for a complete graph $R_{eff}=V-1$.

\begin{figure}
    \centering
    \includegraphics[scale=0.4]{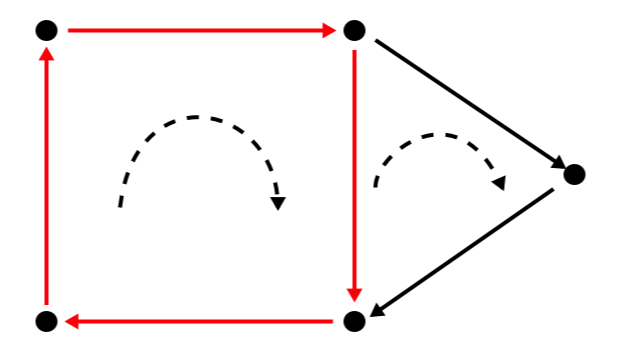}
    \caption{Cycles in graphs. Given an oriented graph, we assign an orientation to the edges and each cycle. Given a spanning tree $\mathcal T$, the number of fundamental cycles are associated to the element not in $\mathcal T$.}
    \label{fig:tikz-egplanargraph_cycle}.
\end{figure}

\subsection{Graph theoretic formalism}
The underlying assumption of \eqref{eq:rescirc} is that the voltage generators $s_i$'s are in series to the resistances $i$'s, while the circuit can be represented as a graph with $E$ edges. Given the branch currents and a certain orientation of the (fundamental) graph cycles $1,\dots, L$, we can obtain the so-called cycle matrix of the circuit $A$, of size $L\times E$, such that $\boldsymbol{\mOmega}=\boldsymbol{A}^t(\boldsymbol{A} \boldsymbol{A}^t)^{-1} \boldsymbol{A}$, where $^t$ denotes the transpose. The fundamental cycles $L$ can be obtained by picking any spanning tree $\mathcal T$, and associating the fundamental cycles in the complement of the tree $\bar{\mathcal T}$.\footnote{Details about this construction can be found in \cite{Caravelli2016rl}}.

Although the details of the derivation of $\boldsymbol{\mOmega}$ from the circuit topology are beyond the scope of this paper, where $\boldsymbol{\mOmega}$ will be kept generic and unrelated to any underlying graph, it is worth stressing its relation to the graph structure of the circuit.
We begin by introducing some concepts from graph theory before defining the projector operator $\mOmega$. A \emph{walk} on a directed graph $G$ is a sequence of vertices, denoted as $v_1, v_2, \dots, v_n$, such that for every pair of consecutive vertices in the sequence, there exists an edge that connects them. Formally, for every $i$ such that $1 < i \leq n$, either $(v_{i-1}, v_i) \in E(G)$ or $(v_i, v_{i-1}) \in E(G)$, where $E(G)$ is the edge set of the graph $G$.

A \emph{cycle} is a walk denoted as $W = v_1 v_2 \dots v_n$, such that $n \geq 3$, $v_0 = v_n$, and the vertices $v_i$, $0 < i < n$, are distinct from each other and from $v_0$. An example of a cycle is shown in Figure \ref{fig:tikz-egplanargraph_cycle}.
The space spanned by the edges of the graph has a vector space structure. The \emph{cycle space} is the subset of the edge space that is spanned by all the cycles of the graph. 

A \emph{cycle matrix} $A$ is a matrix whose columns form the basis of the cycle space. For example, if ${\vec c_1, \dots, \vec c_n}$ is a set of column vectors that form a basis of the cycle space, then the cycle matrix is $A = (\vec c_1, \dots, \vec c_n)$.
Finally, the \emph{projector operator $\mOmega$ on the cycle space of the graph} is defined as $\mOmega = A(A^T A)^{-1}A^T$. The issue with this formalization is that calculating the cycle matrix is cumbersome, as this is a nonlocal quantity. Evaluating $\mOmega$ is easier instead if one uses instead a local quantity, such as the local node connectivity. Let $G$ be a directed graph. One way of representing $G$ is by specifying where each of its edges starts and where it ends. It is convenient to do this using a matrix. We call such a matrix an \emph{incidence matrix}.
Each column of its incidence matrix represents an edge: the first edge starts at vertex $1$ and ends at vertex $2$, so the first column of the matrix has entry $1$ in the first row and entry $-1$ in the second row. All the other entries in the first column are $0$ because none of the other vertices are a part of that edge. By continuing this process for every edge, we get the incidence matrix $B$ of the given graph:

\begin{equation}
B =
\begin{pmatrix}
 1 &-1 &-1 & 0 & 0 & 0 & 0 & 0 \\  
-1 & 0 & 0 & 1 & 1 & 0 & 0 & 0 \\
 0 & 1 & 0 &-1 & 0 &-1 &-1 & 0 \\
 0 & 0 & 0 & 0 & 0 & 1 & 0 &-1 \\
 0 & 0 & 1 & 0 &-1 & 0 & 1 & 1 \\
\end{pmatrix}
\end{equation}

More formally, if a graph $G$ has $v$ vertices and $e$ edges, then the incidence matrix $B$ of $G$ is a $v \times e$ matrix (i.e. a matrix with $v$ rows and $e$ columns), whose entry $(i,j)$ is defined as

\begin{equation}
B_{ij} \coloneqq \begin{cases}
1 & \text{if $v_i$ is the initial vertex of the edge $e_j$,} \\
-1 & \text{if $v_i$ is the terminal vertex of the edge $e_j$,} \\
0 & \text{otherwise.}
\end{cases}
\end{equation}

If $B$ is an incidence matrix, we can define the projector operator $\mOmega_{B^T}$:

\begin{equation}
\label{eq:Omega_def}
\mOmega_{B^T} = B^T\left(B B^T\right)^{-1} B,\ \ \ \mOmega_{B^T}^2 = \mOmega_{B^T}.
\end{equation}

If we try to compute $\mOmega_{B^T}$ from the definition, we will find that the inverse $\left(B B^T\right)^{-1}$ does not exist in general. This can be solved by either considering the reduced incidence matrix (obtained by removing a row from the original incidence matrix) or by taking the pseudoinverse of the expression instead of the ``regular" inverse.

The following useful identity connects the projector operator $\mOmega$ (based on the matrix $A$) to the projector operator $\mOmega_{B^T}$ (based on the incidence matrix $B$):
$\mOmega = \mI - \mOmega_{B^T}$, where $\mI$ is the identity matrix. Such duality between the incidence and cycle adjacency matrix is instead a manifestation of the cycle and nodal analysis in circuit theory.
Such a relationship is based on the fact that $\mOmega$ and $\mOmega_{B^T}$ are orthogonal matrices and the fact that $A B^t=0$, an identity related to the conservation of energy called Tellegen's theorem \cite{Caravelli2017}.

\begin{figure}
    \centering
    \includegraphics[scale=0.25]{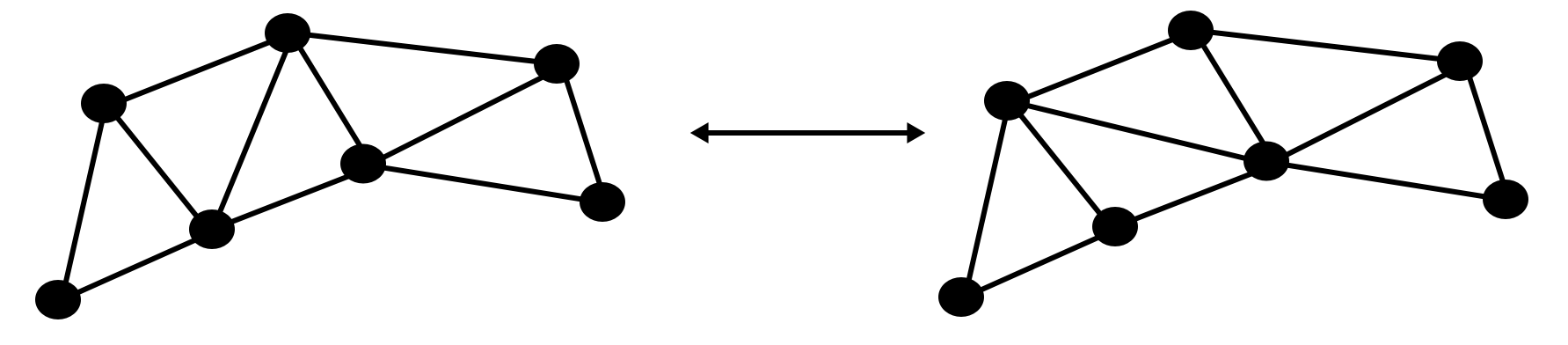}
    \caption{Example of an isospectral transformation for the projector operator $\mathbf \mOmega$. The number of fundamental cycles is preserved, and thus $|Span(A)|$ and the number of edges is preserved.}
    \label{fig:isospectral}
\end{figure}

\subsection{Orthogonal group and cycle preserving transformations}

In essence, the projector $\mOmega$ captures the part of a vector that can be represented as a combination of cycles in the graph, and discards any components that do not lie within the cycle space. Let us then introduce a set of transformations that preserve its spectrum.

\begin{definition}[Cycle preserving transformation]. Let $G=(\mathcal E,\mathcal V)$ be a graph. Let $\theta: G\rightarrow G$ be a map, such that for $\mathcal G \in G$, we have $|\mathcal L( \mathcal G)|=|\mathcal L\big(\theta(\mathcal G)\big)|$ and $|\mathcal E(\mathcal G)|=|\mathcal E\big(\theta(\mathcal G)\big)|.$ Then, $\theta$ is cycle preserving.
\end{definition}

Cycle preserving transformations can be thought as transformations in which an edge is disconnected from the graph, and connected to two other nodes where such edge was not present, as we consider for simplicity simple graphs. An example of such transformation is shown in Fig. \ref{fig:isospectral}. As we can see, the transformation preserves the total number of edges.

Let us now introduce a very simple result, but key to the rest of this manuscript.
The significance of the following lemma lies in its ability to establish a crucial connection between edge-preserving graph transformations and the orthogonal group. Despite its simplicity, this lemma plays a pivotal role in the following, by providing a fundamental insight that forms the basis for our further analyses and conclusions.
\vspace{0.5cm}
\begin{proposition}\label{lemma:iso}Let $\mathcal G$ be a graph $\mathcal G=(\mathcal E,\mathcal V)$.  Let $\theta$ be cycle preserving. If $\mathbf \mOmega_{A(\mathcal G)}$ is the projector operator on $\mathcal L=\text{Span}(A)$, we have
\begin{equation}
    \mathbf \mOmega^\prime\equiv \mathbf \mOmega_{A\big(\theta(\mathcal G)\big)}=\mOt\mathbf \mOmega_{A(\mathcal G)} \mO
\end{equation}
for some orthogonal matrix $\mO$.
\end{proposition}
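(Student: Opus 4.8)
The plan is to recognize that this proposition is, at bottom, the elementary linear-algebra statement that two real orthogonal projectors of equal rank acting on the same ambient space are orthogonally similar, and that the cycle-preserving hypothesis is precisely what supplies ``equal rank'' and ``same ambient space.'' So the strategy is to reduce the claim to the spectral theorem for real symmetric matrices.

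First I would record the two structural facts about $\mOmega_{A(\mathcal{G})}$ that follow directly from the definition $\mOmega = A(A^tA)^{-1}A^t$ in Eq.~\eqref{eq:Omega_def}: it is symmetric, $\mOmega^t=\mOmega$, and idempotent, $\mOmega^2=\mOmega$, hence it is the \emph{orthogonal} projector onto $\mathcal{L}=\text{Span}(A)\subseteq\mathbb{R}^{E}$. Its rank equals $\dim\mathcal{L}=|\mathcal{L}(\mathcal{G})|$, the number of fundamental cycles, and its ambient dimension is $E=|\mathcal{E}(\mathcal{G})|$. The identical description applies verbatim to $\mOmega'=\mOmega_{A(\theta(\mathcal{G}))}$, the orthogonal projector onto the cycle space of $\theta(\mathcal{G})$ inside $\mathbb{R}^{E'}$ with $E'=|\mathcal{E}(\theta(\mathcal{G}))|$.

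Next I would feed in the hypothesis. Because $\theta$ is cycle preserving, it fixes both the edge count, $|\mathcal{E}(\theta(\mathcal{G}))|=|\mathcal{E}(\mathcal{G})|$, and the cycle count, $|\mathcal{L}(\theta(\mathcal{G}))|=|\mathcal{L}(\mathcal{G})|$. The first equality makes $\mOmega$ and $\mOmega'$ operators on the \emph{same} space $\mathbb{R}^{E}$; the second makes their ranks equal, say both equal to $L$. Since a symmetric idempotent of rank $L$ on $\mathbb{R}^{E}$ has spectrum consisting of the eigenvalue $1$ with multiplicity $L$ and the eigenvalue $0$ with multiplicity $E-L$, the two projectors are isospectral. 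Finally I would close with the spectral theorem: diagonalize $\mOmega=U D U^t$ and $\mOmega'=V D V^t$ with $U,V$ orthogonal and $D=\text{diag}(1,\dots,1,0,\dots,0)$ the common spectrum, and set $\mO=UV^t$. Then $\mO$ is orthogonal and $\mOt\,\mOmega\,\mO = VU^t\,(U D U^t)\,U V^t = V D V^t=\mOmega'$, which is exactly the asserted identity.

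I do not expect a genuine obstacle here, since the mathematical content is just that isospectral real symmetric matrices are orthogonally conjugate. The only points needing care are bookkeeping rather than difficulty: verifying that $\mathrm{rank}\,\mOmega$ is exactly the fundamental-cycle count, so that preservation of $|\mathcal{L}|$ really translates into equal rank, and handling the pseudoinverse convention flagged after Eq.~\eqref{eq:Omega_def} consistently so that $\mOmega$ is genuinely the symmetric projector of the stated rank. I would also note for honesty that the matrix $\mO$ is far from unique — it depends on the chosen eigenbases and on arbitrary rotations within each degenerate eigenspace — but existence is all the statement claims, so this freedom is harmless.
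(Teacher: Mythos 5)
Your proof is correct and follows essentially the same route as the paper's: both arguments reduce the claim to the fact that the two projectors are isospectral real symmetric idempotents of equal rank on the same space $\mathbb{R}^E$, and then invoke orthogonal diagonalizability. Your write-up is in fact slightly more careful than the paper's, which justifies the final conjugation step only by saying the matrices are ``real'' (symmetry is the property actually needed, as you make explicit with the construction $\mO = UV^t$).
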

\begin{proof}. Since $\theta$ preserves the cycle space, we have $L=|\mathcal L|=|\mathcal L^\prime|$ and $|\mathcal E|=|\mathcal E^\prime|$. Then, we know that $\text{Span}(A)=\text{Span}(A^\prime)$, since $\mathbf \mOmega^2=\mathbf \mOmega$, we have that the spectrum is given by $\Lambda(\mathbf \mOmega)=\{0^{E-L},1^{L}\}$ is preserved. As such, $\mathbf \mOmega$ and $\mathbf \mOmega^\prime$ are isospectral, and thus since both $\mathbf \mOmega$ and $\mathbf \mOmega^\prime$ are real,  $\mathbf \mOmega^\prime=\mOt \mathbf \mOmega \mO $ for some orthogonal matrix $\mO$.
\end{proof}
\vspace{0.5cm}
We say that two graphs $\mathcal G$ and $\mathcal G^\prime$ are \textit{conjugate} if $\mOmega$ and $\mOmega^\prime$ are related by an orthogonal transformation. It is easy to see that, in fact, such transformations form an equivalence class. 
Such a simple result  will go a long way in this paper, as it allows us to perform averages over the orthogonal transformations and derive a mean field result in each conjugacy class, e.g. averaging on graphs with the same number of cycles.

\vspace{0.5cm}


To see a direct application of this result, consider again the solution for a resistive circuit, eqn. (\ref{eq:rescirc}).
\begin{figure}
    \centering
    \includegraphics[scale=0.25]{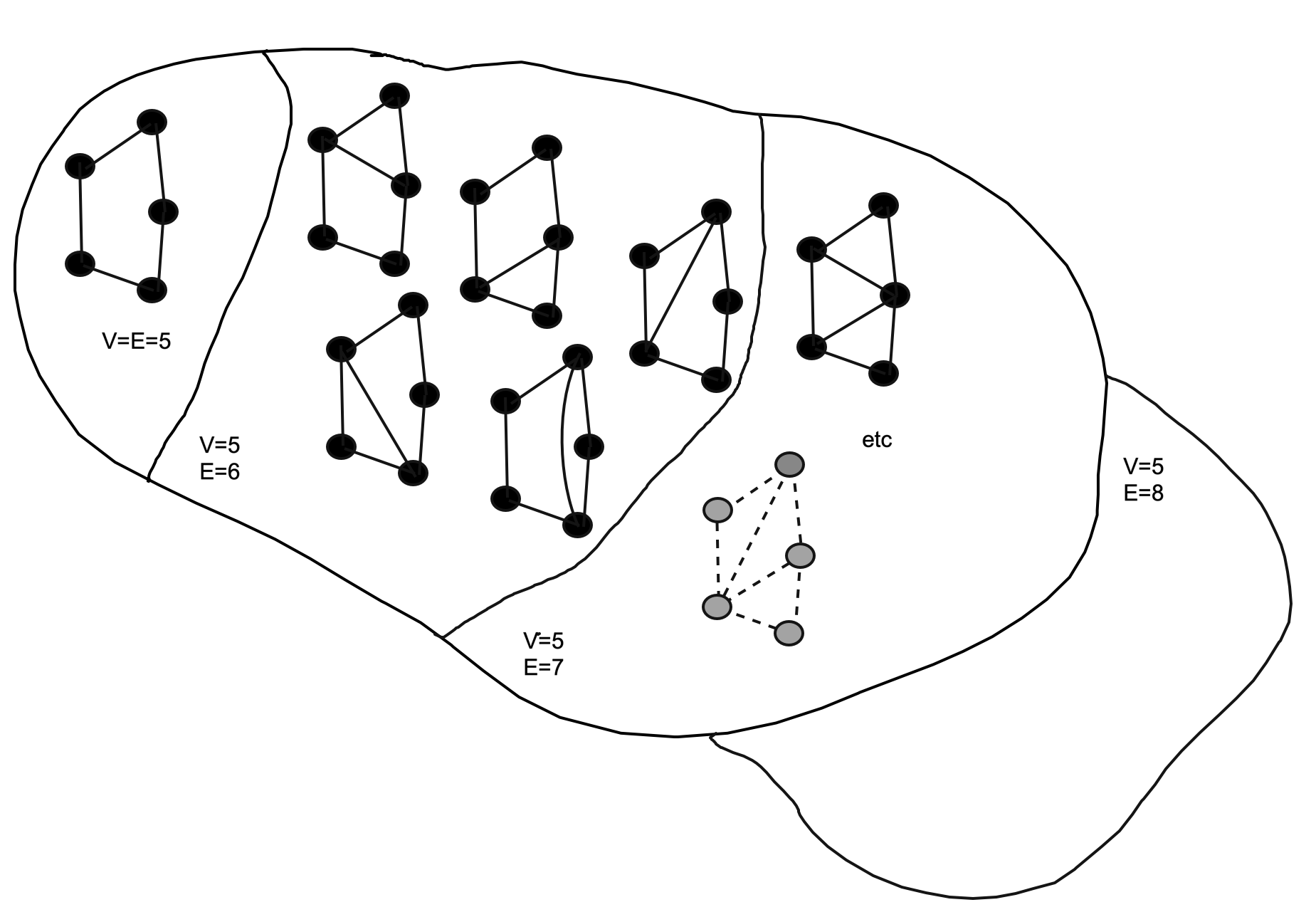}
    \caption{Examples of partitioning in equivalence classes for $V=5$ for graphs representing circuits. As we can see, the elements in each class are related by an edge-preserving transformation. We characterize each equivalence class as $\mathrm{C}(E,L)$, or alternatively $\mathrm{C}(E,V)$.}
    \label{fig:eqclassv5}
\end{figure}
If two graphs representing the resistive circuits are conjugate, then because of Proposition \ref{lemma:iso} we know that
\begin{eqnarray}
    \vec i&=&\mOmega \vec s\\
    \vec i^\prime &=&\mO^t \mOmega \mO \vec s
    \label{eq:currentveco}
\end{eqnarray}
Which implies immediately that
\begin{eqnarray}
    \mO \vec i^\prime =\mOmega \mO \vec s, 
\end{eqnarray}
from which we get that up to a rotation of currents and voltages, the two are identical. This implies that there is a class of circuits whose currents and applied voltages can be mapped to each other via an orthogonal transformation. This is why we can introduce an equivalence class induced by such transformations. We can then classify the number of such equivalence classes.

\begin{definition}
We say that $\mathcal G\sim \mathcal G^\prime$ if $\mOmega(\mathcal G)=\mO^t \mOmega(\mathcal G^\prime) \mO$.
\end{definition}

Using the equivalence relation, we have the following lemma for circuits

\begin{proposition} \label{lemma:eqclasses}
    Let $\mathcal G$ be a graph on $V$ nodes representing a circuit, e.g. there are no nodes of degree 1 and $\mathcal G$ is connected. Then there are at most  $\frac{1}{2} (V^2-3V+2)$ and minimum  $2V-1-\big(V\ \text{mod}\  2)/2$ cycles respectively. This classifies the equivalence classes with fixed nodes.
\end{proposition}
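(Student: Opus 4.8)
The plan is to reduce both extremal bounds to a statement about the edge count, using the fact that for a connected graph the number of fundamental cycles is the cycle rank (first Betti number)
\begin{equation}
L = E - V + 1,
\end{equation}
which holds because the incidence matrix $B$ of a connected graph on $V$ vertices has rank $V-1$, so the cycle space $\mathcal L = \ker B$ has dimension $E-(V-1)$. Since $\mOmega$ projects onto $\mathcal L$, its rank $L$ is exactly this nullity; as $L$ is monotone in $E$, extremizing the number of cycles is the same as extremizing $E$ over the admissible family (connected, simple, no vertex of degree one). I would open the proof by recording this reduction.

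For the upper bound I would invoke that a simple graph on $V$ vertices has at most $\binom{V}{2}$ edges, with equality only for the complete graph $K_V$. For $V\ge 3$ the graph $K_V$ is connected and $(V-1)$-regular, hence admissible, so substituting $E=\binom{V}{2}$ into the cycle-rank relation gives
\begin{equation}
L_{\max} = \binom{V}{2} - V + 1 = \frac{1}{2}\left(V^2 - 3V + 2\right),
\end{equation}
which is the claimed maximum; this half is clean and essentially immediate.

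For the lower bound I would combine the handshake identity $\sum_v \deg(v) = 2E$ with the degree restriction coming from ``representing a circuit.'' Read literally, the exclusion of degree-one vertices gives only $\deg(v)\ge 2$, hence $E\ge V$, realized by a single cycle with $L=1$; to obtain a minimum that grows with $V$ one must interpret the circuit condition as a genuine non-degeneracy requirement (excluding degree-two pass-through nodes, or demanding higher edge-connectivity), which raises the minimal degree and therefore the minimal edge count via $E\ge \lceil dV/2\rceil$. The term $(V\bmod 2)$ in the stated minimum is precisely the handshake obstruction that the number of odd-degree vertices is even, so that for odd $V$ a perfectly regular minimal graph is impossible and one extra edge is forced. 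The \emph{main obstacle} is exactly here: one must fix the admissibility class unambiguously and then exhibit the matching sparsest (near-regular) construction, since the constant in $2V-1-(V\bmod 2)/2$ is sensitive to whether the intended constraint is $\deg\ge 2$, $\deg\ge 3$, or a connectivity condition — I would make this hypothesis explicit and verify (or correct) the displayed expression against the handshake bound plus an explicit graph.

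Finally, to justify that these bounds ``classify the equivalence classes,'' I would argue that every integer value of $E$ between the minimum and $\binom{V}{2}$ is realizable: starting from the minimal admissible graph one adds chords one at a time, each step preserving connectivity and the degree constraint while raising $E$ (hence $L$) by one, up to $K_V$. By Proposition \ref{lemma:iso} each value of $L$ labels exactly one conjugacy class $\mathrm{C}(E,V)$, so the number of equivalence classes with $V$ fixed equals $L_{\max}-L_{\min}+1$, and the two displayed quantities bracket precisely this range.
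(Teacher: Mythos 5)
Your upper bound and your classification argument coincide with the paper's own proof: it likewise reduces everything to the spectral fingerprint $\Lambda(\mOmega)=\{1^{L},0^{E-L}\}$, uses $L=E-(V-1)$ for a connected graph, gets the maximum $\tfrac{1}{2}(V^{2}-3V+2)$ from the complete graph $\mathcal K_V$, and realizes the intermediate values by adding one edge at a time up to $\mathcal K_V$, which is exactly your chord-interpolation step (stated in the paper as the corollary that fixed $(E,V)$ determines a single class $\mathrm{C}(E,V)$).

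The divergence is the lower bound, and there your diagnosis is correct rather than incomplete: under the hypotheses as written (connected, no degree-one vertices) the minimum is $L=1$, attained by the cycle graph $C_V$, which is connected and $2$-regular; the stated minimum $2V-1-(V\bmod 2)/2$ is therefore not provable, and as printed it is not even an integer for odd $V$. It is worth seeing where the paper's own proof of this half breaks down, because it is precisely the ambiguity you flagged. The paper takes a spanning tree, asserts that the parity of its number of leaves matches the parity of $V$, and concludes that at least $k=\lfloor V/2\rfloor$ edges must be added to pair up the leaves. But the leaf count of a spanning tree is not determined by $V$: a Hamiltonian path has two leaves for every $V$, and closing it with a single edge produces exactly $C_V$; the ``at least $k$ edges'' claim holds only for star-like spanning trees, which nothing in the hypotheses forces. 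The argument is also internally inconsistent: mid-proof it states that the minimum number of fundamental cycles is $k=\lfloor V/2\rfloor$, while the final display reports $V-1+\lfloor V/2\rfloor$ --- the \emph{edge} count of that construction, not its cycle count --- and then simplifies it incorrectly to $2V-1-(V\bmod 2)/2$, which equals neither quantity. So the ``main obstacle'' in your proposal is a genuine error in the proposition itself; your suggested repair --- fix a stronger admissibility class explicitly (e.g.\ minimum degree $3$, which via the handshake bound gives $E\geq\lceil 3V/2\rceil$ and hence $L\geq\lceil V/2\rceil+1$) and exhibit the matching near-regular extremal graph --- is the right way to restate and prove a correct version, and the paper's proof offers no help toward it.
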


\begin{proof}
 The number of equivalence classes is the number of possible spectral configurations of $\mOmega$. This is given in the most general case, $\Lambda(\mOmega)=\{1^{L},0^{E-L}\}$.
Circuits must satisfy certain relationships between the number of cycles, the number of fundamental faces and the number of vertices. First, note that the ordering of spectrum is not important, as we can permute the eigenvalues via a permutation matrix $P$, and then introduce the orthogonal matrix $\mO\rightarrow P\mO P^t$ to reorder them. Since $P\mO P^t$ is still an orthogonal matrix, permutations of spectra belog to the same class.  Thus, the eigenvalue ordering is not important, but only the number of 1's and 0's, the spectral fingerprint. The number of fundamental cycles is equal to the number $L=E-|\mathcal T|$ where $\mathcal T$ is the cardinality of a spanning tree. Because the graph of a circuit is connected, a spanning tree always exists with the number of edges $|\mathcal T|=V-1$ (e.g. we do not consider forests). 
It is easy to see that the number of leaves is even if $V\ \text{mod}\ 2=0$ and the number of leaves is odd if $V\ \text{mod}\ 2=1$. Let us assume $V=2k$ even or $V=2k+1$ odd with $k\geq 1$. Then, we need to add at least $k$ edges between the leaves to form $k$ cycles. Thus, the minimum number of fundamental cycles is $k$.  We can then add one edge at a time until we obtain a complete graph.  Then, for a graph with $V=2k(+1)$ nodes, we have at most $L=V(V-1)/2 - V+1=\frac{1}{2} (V^2-3V+2)$ (the graph $\mathcal K_V$), and minimum $L=V-1+ V-\big(V\ \text{mod}\  2)/2\big)=2V-1-\big(V\ \text{mod}\  2)/2\big)$ cycles (the graph in which we added the minimum number of edges to remove the leaves from the spanning tree).
\end{proof}

\begin{corollary}
    All graphs with fixed number of edges and vertices $\mathcal G=(\mathcal E,\mathcal V)$ representing a circuit belong to the same equivalence class under $\sim$. We call this equivalence class $\mathrm{C(E,V)}$.
\end{corollary}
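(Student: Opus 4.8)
The plan is to reduce the statement to a single numerical observation: for a connected circuit graph the number of fundamental cycles $L$ is completely determined by $E$ and $V$, so fixing both pins down the entire spectrum of $\mOmega$, and the conclusion then follows immediately from Proposition \ref{lemma:iso}. In other words, the equivalence relation $\sim$ sees $\mOmega$ only through its spectrum, so I only have to show that two $(E,V)$-matched circuit graphs carry projectors with the same spectrum.

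First I would recall that $\mOmega = A(A^{T}A)^{-1}A^{T}$ is a real symmetric projector acting on $\mathbb{R}^{E}$, whose rank equals the dimension of the cycle space $\LC=\mathrm{Span}(A)$. Since a circuit graph is connected, every spanning tree $\mathcal T$ has exactly $|\mathcal T|=V-1$ edges, so the cyclomatic count gives $L=\dim\LC=E-(V-1)=E-V+1$. This is precisely the relation already used inside the proof of Proposition \ref{lemma:eqclasses}. Consequently, given two circuit graphs $\mathcal G$ and $\mathcal G'$ with the same $E$ and the same $V$, this step yields $L=L'=E-V+1$.

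Next I would invoke the spectral argument. The projectors $\mOmega(\mathcal G)$ and $\mOmega(\mathcal G')$ are $E\times E$ real symmetric matrices with identical spectrum $\Lambda=\{0^{E-L},1^{L}\}$. By the spectral theorem each is orthogonally diagonalizable to the common diagonal form $\mathrm{diag}(1^{L},0^{E-L})$; composing the two diagonalizing orthogonal maps produces an orthogonal $\mO$ with $\mOmega(\mathcal G)=\mO^{t}\mOmega(\mathcal G')\mO$. This is exactly the isospectral-implies-conjugate step already carried out in Proposition \ref{lemma:iso}. Hence $\mathcal G\sim\mathcal G'$, and all such graphs lie in a single class $\mathrm{C}(E,V)$.

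There is essentially no hard step: the whole content is that $\sim$ depends on $\mOmega$ only through its spectrum, and for a projector that spectrum is fixed once the ambient dimension $E$ and the rank $L$ are fixed. The only genuinely new ingredient beyond Proposition \ref{lemma:iso} is the identity $L=E-V+1$, which collapses the labelling pair $(E,L)$ of Proposition \ref{lemma:eqclasses} into the pair $(E,V)$. The one point I would state carefully is that $\sim$ is literally the full orthogonal-conjugacy relation on projectors, not merely the relation generated by explicit cycle-preserving moves; thus no surjectivity onto a set of graph transformations has to be verified, and the potential obstacle of connecting arbitrary $(E,V)$-matched graphs by an explicit chain of edge moves is sidestepped entirely.
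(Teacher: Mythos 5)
Your proof is correct, and it follows the same essential chain as the paper: fixing $(E,V)$ for a connected circuit graph forces $L=E-V+1$, hence the spectrum $\{0^{E-L},1^{L}\}$ of the projector is fixed, hence any two such projectors are orthogonally conjugate. Where you diverge is in the final step. The paper's own proof does not invoke the spectral theorem directly; instead it asserts that all $(E,V)$-matched graphs ``are related to each other via local edge moves'' (the cycle-preserving moves of Fig.~\ref{fig:isospectral}) and then applies Proposition~\ref{lemma:iso} to each move. That route silently assumes a nontrivial combinatorial fact --- that the flip graph of local edge moves on circuit graphs with fixed $(E,V)$ is connected --- which the paper never proves. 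Your version sidesteps this entirely by observing, correctly, that the relation $\sim$ is \emph{defined} as orthogonal conjugacy of the projectors, so isospectrality alone suffices: two real symmetric projectors on $\mathbb{R}^{E}$ of equal rank are orthogonally conjugate by the spectral theorem, full stop. In this sense your argument is not just equivalent but strictly tighter --- it proves the corollary without the unverified edge-move connectivity claim, and your closing remark identifying exactly which potential obstacle is being sidestepped is the right thing to flag.
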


\begin{proof}
    This is a consequence of  Proposition \ref{lemma:iso} and Proposition \ref{lemma:eqclasses}. If $V$ is fixed, and $E$ is fixed, then $L=E-V+1$. Then, if $E$ and $V$ are fixed, then $\forall \mathcal G$ such that $|\mathcal E|=E$ and $|\mathcal V|=V$ the number of fundamental cycles is fixed. Then all graphs in the same equivalence class are related to each other via local edge moves as in Fig. \ref{fig:isospectral} which preserve the number of edges and nodes. 
\end{proof}

The previous results establish that in order to remain into the same equivalence class it is sufficient to perform moves that preserve the number of edges, given the number of vertices. Then, it follows that $\mOmega\sim \mOmega^\prime$, then we know the number of edges and the number of vertices must be identical.
An example of such equivalence class partitioning is shown in Fig. \ref{fig:eqclassv5}.

Before we continue, let us make some comments on the relation between number of cycles, edges and nodes. In particular, it will be important to calculate the ration $L/E$. For a connected graph, the cardinality of a spanning tree is equal to $V-1$, and thus
\begin{eqnarray}
    \frac{L}{E}=\frac{E-V+1}{E}
    =1- \frac{V}{E}+\frac{1}{E}.
\end{eqnarray}
Because of the handshaking lemma, we know that $V \bar d= 2 E$, from which we obtain that
$\frac{V}{E}=\frac{2}{\bar d}$, where $\bar d$ is the average degree.
We then have
\begin{eqnarray}
    \frac{L}{E}=
    1- \frac{2}{\bar d}+\frac{1}{E}.
    \label{eq:conntodegree}
\end{eqnarray}
We see from the equation above that for complete graphs this ratio goes to one.

\subsection{Averaging over the equivalence class}
One question we might ask is then: what is the \textit{average} current vector on the equivalence class determined by cycle isomorphism? What we can do is average over all orthogonal matrices which span an equivalence class $\mathrm{C}(E,L)$, and write
\begin{eqnarray}
    \langle \vec i\rangle_\mO=\frac{1}{r} \langle \mOt \mOmega \mO\rangle_{\mO} \vec s.
    \label{eq:averageo}
\end{eqnarray}
We can identify one element in the class $\mathrm{C}(E,V)$, e.g. the graph $\mathcal G_0$ and associate it to the identity element $\mO_0=\mathbb I\in \mathbb O(E)$, the group of orthogonal matrices of size $E$. Then, given $\mOmega_0$ associated to $G_0$, $\forall G_i\in \mathrm{C}(E,V)$, there exist an orthogonal matrix $\mO_{i}$ s.t. $\mOmega_i=\mO_i^t \mOmega_0 \mO_i$.  For generic graphs, such average is over a finite number of elements. As an example, consider Fig. \ref{fig:eqclassv5}. For $V=5, E=5$, there is a unique representative for this equivalence class. For $V=5, E=6$, there are instead 5 representatives. This means that there is, up to permutations of nodes and edges, only a finite number of orthogonal matrices to average over, and thus the average above is over a finite sum. Let $S(E,L)$ be the number of elements in $\mathrm{C}(E,V)$.
Then, our average of eqn. (\ref{eq:averageo}) can be written as
\begin{eqnarray}
    \langle \cdot \rangle_{\mO}=\frac{1}{S(E,L)} \sum_{i=0}^{S(E,L)-1}  \mO_{i}^t \cdot \mO_i
    \label{eq:averageo1}
\end{eqnarray}
and thus we obtain that
\begin{eqnarray}
 \langle \vec i \rangle_{\mO}=\frac{1}{r}\frac{1}{S(E,L)} \sum_{i=0}^{S(E,L)-1}  \mO_{i}^t \mOmega  \mO_i \vec s
    \label{eq:averageo2}
\end{eqnarray}

For finite $V,E$, we are not able to perform such an average in this paper. However, a possibility might be to perform the average when the number of edges and loops goes to infinity. In this case, the number of orthogonal matrices also goes to infinity. A mathematical question might then be what is the measure in the limit of $E,L$ going to infinity? Our working assumption going forward will be that such average converges to the \textit{continuous} group of orthogonal matrices, but this is a rather non-trivial statement. In the following, this will be our working assumption and in the worst-case scenario our approximation. The average over continuous orthogonal matrices can be performed using the Haar measure, e.g. we will perform the replacement
\begin{eqnarray}
    \lim_{E,L\rightarrow \infty}\frac{1}{S(E,L)} \sum_{i=0}^{S(E,L)-1}  \mO_{i}^t \cdot \mO_i\rightarrow \int_{\mathbb O(E)} \mO_{i}^t \cdot \mO_i d\mO, \nonumber 
\end{eqnarray}
where $d\mO$ is the continuous Haar measure over the orthogonal group \cite{Haar}. 
Thus the group $\mathbb O(E)$ is equipped with a Haar probability measure, we can define the average
\begin{eqnarray}
    \langle \cdot \rangle_{\mO}\equiv 
    \int_{\mathbb O(E)} \cdot dO.
\end{eqnarray}
Then, performing this average is equivalent to averaging over random orthogonal matrices with a constant measure over $\mathbb O(E)$.

To motivate this approach, let us consider a very simple example, based on the exact current solution of resistive network, eqn. (\ref{eq:currentveco}).
We have an expression of the form
\begin{eqnarray}
    \langle \mO^t \mOmega \mO\rangle_{\mO}
\end{eqnarray}
where the average is performed via the Haar measure over the orthogonal group. Let us anticipate one result here: it is known that the operation above is the first order \textit{isospectral twirling} of the matrix $\mOmega$, and if one is acquainted with these methods, it is immediate to see that $\langle \mOt \mOmega \mO\rangle_{\mO}\propto \mI$. The reason why this is the case will be clear in the next section, where we describe in detail the techniques used to perform these averages in detail. We will just state for the sake of clarity here that
such average gives
\begin{eqnarray}
    \langle \vec i\rangle_\mO=\frac{\Tr(\mOmega)}{r E} \vec s.
\end{eqnarray}
Above, $E$ is the size of the matrix $\mO$, which in our case is the number of edges of the graph, and thus resistors. The quantity $\Tr(\mOmega)$ represents the matrix trace, and is simply the sum over the eigenvalues of $\mOmega$. We know from the introduction that, because the $\mOmega$ is a projector over the number of fundamental cycles, then the number sum of 1's and 0's in the spectrum is simply $L$. It follows that
\begin{eqnarray}
    \langle \vec i\rangle_\mO=\frac{L}{rE} \vec s.
\end{eqnarray}
We see that the average over rotations of the matrix $\mOmega$ leads to a trace, which contains the information over the network topological properties, such as the number of cycles. The amount above is an average over all possible real projectors with the same spectrum, and it provides information on the expected size of the current as a function of graph topological quantities, such as the number of fundamental cycles and the number of edges/resistors. Thus, this result suggests that our system of currents is equivalent to a set of \textit{uncoupled} resistances of effective value $r^\prime= \frac{r E}{L}$. To see whether we can make sense of this, consider $E$ parallel resistance.
We know from earlier discussions that for complete graphs, such a ratio goes to one. This is the same result obtained from the effective resistance analysis obtained in \cite{effgr} using eqn. (\ref{eq:efflap}). We thus expect our Haar average to be informative for very dense circuits.

The second question one might ask is whether this result is typical. This means asking what is the probability that, given two isospectral graphs, a certain measure between the two vectors is far from a certain value. For instance, let us pick
\begin{eqnarray}
    \text{Var}(\vec i)=\| \vec i-\langle \vec i\rangle_{\mO}\|^2 .
\end{eqnarray}
We can expand the expression above, and write it as
\begin{eqnarray}
    \text{Var}(\vec i)&=&\vec i\cdot \vec i+\langle \vec i\rangle_{\mO}\cdot \langle \vec i\rangle_{\mO}-2 \langle \vec i\rangle_{\mO}\cdot \vec i \\
    &=&\vec s\ ^t \cdot \mOmega \vec s +\frac{L^2}{E^2} \vec s\ ^t \cdot \vec s -2 \frac{L}{E} \vec s\ ^t \cdot \mOmega \vec s 
\end{eqnarray}
where we used $\mOmega^2=\mOmega$. We can average this expression again, and obtain 
\begin{eqnarray}
    \langle \text{Var}(\vec i)\rangle_\mO=\frac{L}{E} \|\vec s\|^2\big(1-\frac{L}{E}\big).
\end{eqnarray}
There are now two possible cases. First, consider $\|\vec s\|^2=O(1)$ e.g. not scaling with $E$. In the limit in which $E\gg L$, or $E=L$ such a result is typical because of concentration inequalities such as Chebyshev inequalities. This is the case when for instance we have a single cycle and a large number of resistors. The second is the case when for instance the graph is extremely dense, for instance in the case of a complete graph, in which $L\propto E$.
Using this technique, we can thus make estimates about the typical behavior of a certain system for very dense graphs. 

While resistive circuits might not be considered as interesting as other physical systems, the point of our paper is that similar techniques can also be used in the case of dynamical resistive circuits, such as memristive circuits. Unfortunately, these are typically more complicated, and as we show in this paper, even for the case of the simplest model of memristive dynamics, exact results can be obtained only in the case of dense graphs and in the asymptotic limit.  The question is then what happens then in the case of  memristive circuits?

\subsection{Toy model of memristive circuit dynamics}

Let us now discuss the explicit equation of memristive dynamics for linear and current controlled devices, as a toy model for the conductance transitions. The equation of motion for a circuit of memristors has been derived in a previous study \cite{Caravelli2017,caravelliscience}. It models a flow network that adheres to current and energy conservation laws, and the dynamics of its edges are bounded within an interval. A memristor with memory can be described by an effective resistance that depends on an internal parameter $x$. For example, $TiO_2$ memristors can be approximated by the functional form $R(x)=R_{on}(1-x)+xR_{off}$, where $R_{on}<R_{off}$ are the limiting resistances, and $x\in[0,1]$ physically represents the size of the oxygen-deficient conducting layer \cite{stru08}. The internal memory parameter $x$ evolves, to the lowest order of description, according to a simple equation:

\begin{equation}
\frac{d}{dt}x=\frac{R_{off}}{\beta} I-\alpha x=\frac{R_{off}}{\beta} \frac{V}{R(x)}-\alpha x,
\end{equation}

with hard boundaries. Here, $\alpha$ and $\beta$ are the decay constant and the effective activation voltage per unit of time, respectively, and they determine the timescales of the dynamical system.
Many extensions of this basic model have been considered in the literature. For example, diffusive effects near the boundaries can be approximated by removing the hard boundaries and multiplying by a window function \cite{Joglekar2009, Biolek2013, Prodromakis2011}. Nonlinear conductive effects can also be included by replacing $I$ with a function $f(x, I)$ or introducing new parameter dependencies, such as temperature in the case of thermistors \cite{Ginoux2020}. Comparisons between these models \cite{Ascoli2013, Corinto2012, Corinto2012a, Ascoli2015} show that many of them are more faithful to the precise IV curves of physical devices, but most share the basic pinched hysteresis phenomenology of the linear model. In analytical work, it is often assumed that the dynamics are linear in the currents in order to study the behavior in a wide context. 

For a single memristor under an applied voltage $S$, Ohm's law $S=RI$ can be used to obtain an equation for $x(t)$ in adimensional units , given by:

\begin{eqnarray}
\frac{d}{dt}x=-\alpha x+\frac{S}{\alpha \beta} \frac{1}{1-\chi x}-x=-\partial_x V(x,s),
\label{eq:oned}
\end{eqnarray}

where $\chi=\frac{R_{off}-R_{on}}{R_{off}}$ and $s=\frac{S}{\alpha \beta}$, with $0\leq \chi \leq 1$ in physically relevant cases, and $V(x, s)$ represents an effective potential.

The dynamics of the one-dimensional dynamical system is described by an ODE,  and is fully characterized by gradient descent in the potential:
\begin{equation}
V(x,s)=\frac{\alpha}{2} x^2 +\frac{s}{\chi} \log(1-\chi x),
\label{eq:potential}
\end{equation}

The equation above, in the case of a circuit of purely memristive devices with a voltage generator in series, is generalized to
a set of coupled ODE written compactly as \cite{caravelliscience}
\begin{eqnarray}
    \frac{d\vec x}{dt}=\frac{1}{\beta}(\mI-\chi \mOmega \mX)^{-1}\mOmega \vec s-\alpha \vec x.\label{eq:dyne}
\end{eqnarray}
We can see that the (\ref{eq:dyne}) is written for arbitrary network topologies, as these enter only via $\mOmega$.

To see why $\mOmega$ is connected to Kirchhoff's laws for memristive devices, consider again eqn. (\ref{eq:dyne}). It is easy to see that the transformation $\vec s\rightarrow \vec s+(I-\mOmega) \vec k$ for arbitrary vectors $\vec k$ leaves the dynamics invariant, since $\mOmega(\mI-\mOmega)=0$. This is a manifestation of Kirchhoff's laws \cite{Caravelli2017}, which are however also present in eqn. (\ref{eq:rescirc}). However,  in memristive devices described by eqn. (\ref{eq:dyne}) this is a subset of a larger invariance set, which are more generally of the form
\begin{eqnarray}
    (\vec x,\vec s)\rightarrow (\vec x\ ^\prime,\vec s\ ^\prime)=(\vec x+\delta \vec x,\vec s+\delta \vec s)
\end{eqnarray}
which preserves the dynamics, e.g. such that  the time derivative is invariant, e.g. $\frac{d\vec x}{dt}=\frac{d\vec x\ ^\prime}{dt}$. To see this, let us equate
\begin{eqnarray}
    (\mI-\chi \mOmega \mX)^{-1}\mOmega \frac{\vec s}{\beta}&-&\alpha \vec x \nonumber \\
    &=&(\mI-\chi \mOmega (\mX+\delta \mX))^{-1}\mOmega \frac{(\vec s+\delta \vec s)}{\beta} \nonumber \\
    &&-\alpha (\vec x+\delta \vec x)
\end{eqnarray}
After a bit of algebra, we get the following generalized relationship:
\begin{eqnarray}
    (\mI-\mOmega)\delta \vec x&=&0\nonumber \\
    \delta \vec s&=&\alpha \beta \big(\mI-\chi (\mX+\delta \mX)\big)\delta \vec x-\chi \delta \mX(\mI-\chi \mOmega \mX)^{-1}\mOmega \vec s\nonumber \\
    &&\ +(\mI-\mOmega)\vec k \label{eq:meminv}
\end{eqnarray}
for any vector $\vec k$. To see the origin of this generalized invariance, consider for instance resistances $R_1,\cdots,R_n$ in series with $k$ generators $s_i$. We have
\begin{eqnarray}
    \frac{\sum_{j=1}^k s_j}{\sum_{j=1}^n R_j}= i.
\end{eqnarray}
To preserve the currents, we can either change $R_j$ and $s_j$ in such a way that the ratio is preserved. For networks of memristors, eqn. (\ref{eq:meminv}) generalizes this simple invariance.

\subsection{Existing results on the mean field theories and limitations}

The property that we are interested in in this paper is the mean-field behavior eqn. (\ref{eq:dyne}). Interestingly, it has been shown that in the laminar regime, the dynamics  of the whole system can be described in terms of a dynamical \textit{scalar} order parameter $\bar x$.

The first result of this type was based on a random matrix theory approximation of the resolvent. Recent studies have provided evidence for statistical regularities in the resolvent of large matrices, suggesting that it can be approximated by an effectively one-dimensional matrix that is universal at the zeroth-order in the limit of weak correlations \cite{BCCV}. This result has broad applicability across various domains of complexity science \cite{BCCV2}. By defining the vector $\vec f$ as $\Omega \vec x$, and introducing the matrix $\tilde A$ given by
\begin{eqnarray}
\tilde A=\begin{pmatrix}
        f_1(\vec x) & \cdots & f_1(\vec x)\\
        f_2(\vec x) & \cdots & f_2(\vec x)\\
        \vdots & \ddots & \vdots\\
        f_N(\vec x) & \cdots & f_N(\vec x)
        \end{pmatrix}= \vec f\otimes \vec 1^t,
\end{eqnarray}
we can establish an approximate relation  given by
\begin{eqnarray}
(I-\chi\mOmega \mX)^{-1}= I+\frac{1}{N} \frac{\chi}{1-\frac{1}{N}\chi\sum_{i=1}^N f_i(\vec x)}\tilde A +O\left(\frac{1}{N}\right),\nonumber
\label{eq:meanfieldexp}
\end{eqnarray}
where $\chi$ is a parameter, $X$ is the matrix of variables $\vec x$, and $N$ is the size of the matrix. This approximation allows us to express the dynamics in terms of the effective one-dimensional variable $x_{cg}$, defined as the coarse-grained average of $\vec f(\vec x)$, and an operator mean $\langle \vec\square \rangle =N^{-1}\sum_{i=1}^N\square_i $. The resulting effective dynamics can be written as
\begin{eqnarray}
\frac{d}{d\tau} \bar x&=&\frac{1}{\alpha \beta}\Big( \langle \Omega \vec S\rangle+\frac{\chi \langle \Omega \vec S \rangle}{1-\chi x_{cg}}x_{cg}\Big)- x_{cg}+\mathcal L(\vec x) \nonumber \\
&=&-\partial_{x_{cg}} V(x_{cg},\chi) +\mathcal L(\vec x)
\label{eq:meanfielddynp}
\end{eqnarray}
where $\alpha$, $\beta$, and $\vec S$ are parameters, and $\mathcal L(\vec x)$ represents an effective force arising from imperfect coarse-graining. Notably, this dynamics resembles that of a single memristor, with the parameter $\frac{s}{\alpha \beta}$ replaced by the mean-field value $s=\frac{\langle \Omega \vec S\rangle}{\alpha \beta}$ at the zeroth-order approximation. The effective potential above develops two minima when the parameter $\bar s$, defined as the mean of $\vec s\beta$ crosses a certain threshold. 

Instead, in \cite{caravelliadp} the experimental potentiation-depression fit of the two-probe conductance was based on a different approximation, which is the  mean-field approximation of the matrix inverse which minimizes the Frobenius norm, e.g. the quantity $\bar x$ such that $\|(I+\chi \mOmega \mathbb X)^{-1}-(I+\chi \mOmega \bar x)^{-1}\|^2$ is minimized, which is a more brute but effective approximation. The circuit connectivity was then shown to reabsorbed into effective parameters that can then be fit a posteriori.

More recently, in \cite{caravelliPEDS} it was shown that memristive circuits described by (\ref{eq:dyne}) belong to a larger class of dynamical systems called PEDS (Projective embedding of dynamical systems) where fixed points of larger systems can be mapped to fixed points of a lower dynamical systems, using the relationship $\mOmega^2=\mOmega$. However, also there, exact results could be obtained only for very specific types of projector operators which are essentially a slight generalization of a mean-field coarse graining operator. 

We see then that in both cases the topological properties of the network become less important. This should feel to be dissatisfactory, as we do expect that the topological properties of the network \textit{should} matter. Thus, in what follows, we will average while preserving the global properties of the network, that in our case means preserving the spectral properties of the matrix $\mOmega$.

\vspace{1cm}

\section{Haar measure and averages}

\subsection{von Neumann series and its tensor representation}
The key difference between the purely resistive and the purely memristive case is that the dynamics of the purely memristive case contains an infinite sum and terms of the form $\mO^t \mOmega \mO^t \mX$.
To simplify the calculations that follow, we note that the memristive equation can be written in terms of the adimensional time variable $\tau=\alpha t$ and $\mO$, 
and we perform the following change of variable: $\vec x= e^{-\tau} \vec g$. This simply removes the term $-\alpha \vec x$ in the equation, and allows us to work temporarily only with the infinite sum. We now write the von Neumann series for the matrix inverse, in powers of $\chi$, as 
\begin{eqnarray}
    \frac{d \vec g}{d\tau}&=& \sum_{k=0}^\infty  \chi^k e^{k \tau} (\mOt \mathbf \mOmega \mO  \mathbb G)^k \vec s.
\end{eqnarray}
It is thus immediate to see that performing the average over the orthogonal group as we set to do at the beginning of this manuscript is slightly more complicated than the case of the resistive average.

Let us now perform the following matrix manipulations, using the properties of tensor products, which we introduce in \ref{app:tensprod}. These techniques, used commonly in quantum information, are borrowed from the theory of linear algebra when applied to tensor products of linear operators. Let us assume here that $A_i$'s are linear operators on $\mathbb R^E$. Since we are dealing with matrices of the form  $(\mO^t \mOmega \mO^t \mX)^k$, let us rewrite this matrix power in a way in which we can perform the average using well known results.

Let us begin with the simplest non-trivial case, referring to App. \ref{app:tensprod} for the introduction to the tensor product. We have
\begin{equation}
    A^2=\Tr_2((A \otimes A) \mathbb S)
\end{equation}
Above, the trace is partial, and over the second tensor space. The matrix $\mathbb S$ is the swap operator. 

It follows that we can write 
\begin{eqnarray}
    A^k&=&\Tr_2 \big((A^{k-1} \otimes A) \mathbb S_{12}\big)\nonumber \\
    &=& \Tr_{23} ((A^{k-2}\otimes A) \otimes A) \mathbb S_{12}\mathbb S_{23}) \nonumber \\
    &\vdots& \nonumber \\
    &=& \Tr_{2\cdots k} \big((\underbrace{A \otimes \cdots \otimes A}_{k\ times}) \mathbb S_{12} \mathbb S_{23}\cdots \mathbb S_{k-1,k}\big)\label{eq:tensid}
\end{eqnarray}
where above, with an abuse of notation, we wrote at each step $\mathbb S_{ii+1}$ intending that it swaps only the ith and i-$th+1$ line, and leaves intact the other lines. Let us call $\mathbb S_{2\cdots k}=\prod_{i=1}^{k-1}\mathbb S_{i,i+1}$. Graphically, the identity above can be seen in Fig. \ref{fig:at4}, with the swap operators acting on each tensor index.
\begin{figure}
    \centering
    \includegraphics[scale=0.32]{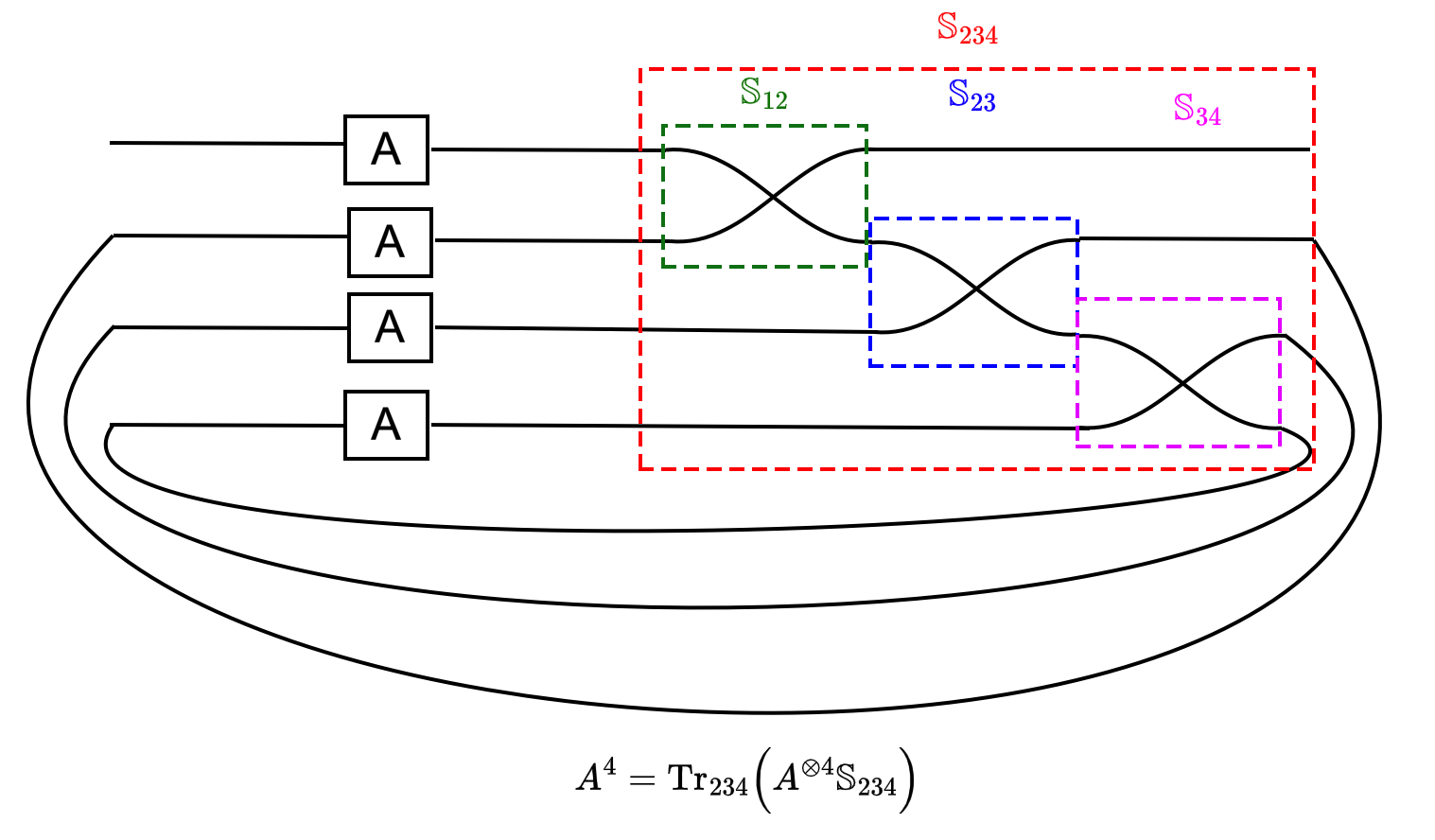}
    \caption{Graphical representation of the identity of eqn. (\ref{eq:tensid}) for $k=4$.}
    \label{fig:at4}
\end{figure}
We now note that, then
\begin{eqnarray}
    (\mOt\mathbf \mOmega \mO \mathbb G)^k&=& \Tr_{2\cdots k} \big((\underbrace{\mOt\mathbf \mOmega \mO \mathbb G \otimes \cdots \otimes \mOt\mathbf \mOmega \mO \mathbb G}_{k\ times}) \mathbb S_{2\cdots k}\big).\nonumber 
\end{eqnarray}

Let us call $(\mOt)^{\otimes k}=\mOt\otimes \cdots \otimes \mOt$ and similarly for $\mathbf \mOmega^{\otimes k}$, $\mO^{\otimes k}$ and $\mathbb G^{\otimes k}$. We have then that
\begin{eqnarray}
    (\mOt\mathbf \mOmega \mO \mathbb G)^k&=& \Tr_{2\cdots k} \big( (\mOt)^{\otimes k} \mathbf \mOmega^{\otimes k} \mO^{\otimes k} \mathbb G^{\otimes k}\mathbb S_{2\cdots k}\big).
\end{eqnarray}
Using this formalism, we then have
\begin{equation}
    \frac{d \vec g}{d\tau}= \sum_{k=0}^\infty  \chi^ke^{k \tau} \Tr_{2\cdots k} \big( (\mOt)^{\otimes k} \mathbf \mOmega^{\otimes k} \mO^{\otimes k} \mathbb G^{\otimes k}\mathbb S_{2\cdots k}\big) \vec s .
\end{equation}
The equation above is written in the form of a twirl. A twirl is an expression of the form \cite{isotwirl}
\begin{eqnarray}
    \langle (\mOt)^{\otimes k} \mathbf \mOmega^{\otimes k} \mO^{\otimes k}\rangle_{\mO}=\mathbb M_k(\mOmega)\label{eq:averagemk}
\end{eqnarray}
where $\mathbb M_k$ is a certain linear operator on $(\mathbb R^E)^{\otimes k}$ following the average. In tensorial graphical form, the expression is shown in Fig. \ref{fig:isotwirl}. The internal index contractions will be shown as blue lines, and external indices to the twirl shown in red.

We now ask for the following question, e.g. what is the average behavior of the circuit when we take the isospectral twirling average with respect to orthogonal operator $\mO$? The techniques used to perform these averages are well established \cite{collins2006integration,zinnjustin,collins2017weingarten,banica}.

We have then that
\begin{equation}
    \big\langle \frac{d \vec g}{d\tau}\big\rangle_{\mO}= \sum_{k=0}^\infty \chi^k e^{k \tau}  \Tr_{2\cdots k} \big( \mathbb M_k \mathbb G^{\otimes k}\mathbb S_{2\cdots k}\big) \vec s .
\end{equation}
The goal will be now to calculate the operators $\mathbb M_k$.

\begin{figure}
    \centering
    \includegraphics[scale=0.38]{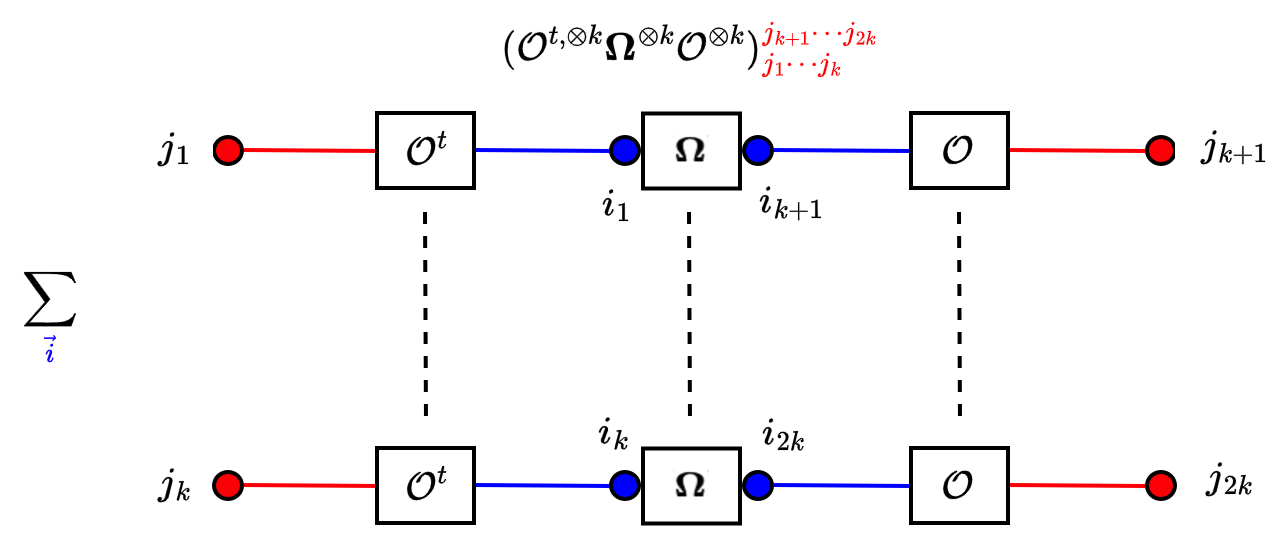}
    \caption{Example of a twirl average in tensorial graphic representation. The operator $\mOmega^{\otimes k}$ is squeezed in between $(\mOt)^{\otimes k}$ and $\mO^{\otimes k}$ which is being averaged over.}
    \label{fig:isotwirl}
\end{figure}
\subsection{Haar measure, Weingarten calculus and mean field theory}

In the present manuscript, we are concerned with performing averages of the form
\begin{eqnarray}
    \langle \cdot\rangle_{\mO}
\end{eqnarray}
over orthogonal matrices.
Let $\mO$ be a $E\times E$ orthogonal matrix, e.g. such that $\mOt=\mO^{-1}$. The set of orthogonal matrices forms a group, which we call $\mathbb O(E)$. For any orthogonal matrix $\mathscr O$, there exists an inverse $\mathscr O^{-1}=\mathscr O^t$ such that $\mO \mOt=\mOt \mO=\mI$. In fact, if $\mO_1,\mO_2$  are orthogonal, it is easy to see that $\mO_1\mO_2$ is also orthogonal, and the identity matrix is associated to the identity in the group. This is a subspace of $\mathcal M_N(\mathbb C)$, the algebra of $E\times E$ complex matrices with operation given by the standard matrix multiplication, and which is a compact topological space.  Haar's theorem \cite{Haar} states that for any locally compact Hausdorff topological group, there exists a left-translation-invariant measure and a right-translation-invariant measure that is unique up to a positive multiplicative constant. In the case of a compact group, these measures coincide and are known as the Haar measure. 
Let $G$ be a group. A representation $R$ of $\mO$ on $\HC$ is defined as a group homomorphism $R: \mO \to GL(\HC)$~\cite{fultonharris}. Given $R$, we can construct the $k$-fold tensor representation of $G$ on $\HC^{\otimes k}$ by acting with $R(g)^{\otimes k}$ for any $g \in G$. It can be observed that if $R$ is a valid representation, then its $k$-fold tensor product also forms a representation. The Haar average over the k-fold tensor product can be performed using the procedure of Weingarten calculus \cite{weingarten78asym, collins2003moments, collins2006integration}, which we will briefly review here.
Consider a bounded operator $A$ on the Hilbert space $\mathcal{H}$, and let $\mO$ be an orthogonal operator chosen uniformly at random from the orthogonal group $\mathbb{O}(E)$. 
We are interested in particular in the \textit{isospectral twirling}. This is the following operator.
The $2k$-Isospectral twirling  of $X$ is defined as \cite{isotwirl}:
\begin{eqnarray}    
  \hat{\mathcal{R}}_{\mathbb O}^{(2k)}(X):=\int\, d \mO\, \mO^{\dag \otimes 2k}\left(X^{\otimes 2k}\right)\mO^{\otimes 2k}
\label{isospectraltwirling}
\end{eqnarray}
Such an operator arises from the linearization of the expectation values of polynomials of $\mO ^{t}X \mO$. 
We see immediately that the operator $\mathbb M_k(\mOmega)$ in eqn. (\ref{eq:averagemk}) is in the form of an isospectral twirling.
The  Haar average will be computed by the Weingarten functions method  for the orthogonal group. 
The Haar measure has the properties $\int d\mO=1$ and $d\mO= d(\mO \mO^\prime)=d(\mO^\prime \mO)$ for any orthogonal matrix $\mO^\prime\in {\mathbb O}(E)$, which are referred to as the left(right)-invariance of the Haar measure.
The general formula to compute the Haar average is given by \cite{collins2006integration}, and for the orthogonal group reads:
\begin{equation}
\langle \mO^{t,\otimes k}A^{\otimes k} \mO^{\otimes k}\rangle_\mO=\sum_{\sigma,\tau\in \mathcal P_{2k}}W^O_g(\sigma\tau^{-1},E)\Tr(A^{\otimes k}\mathbb B_{\sigma})\mathbb B_{\pi}
\label{eq:weingartenmethod}
\end{equation}
where $W^O_g(\sigma\tau^{-1},E)$ denotes the Weingarten function of the group $\mathbb O(E)$.  For the case of the orthogonal group $\mathbb B_{\sigma}$ and $\mathbb B_{\tau}$ are operators associated with elements of the Brauer algebra $\sigma$ and $\tau$, respectively, and we will describe them in a moment. Eqn. (\ref{eq:weingartenmethod}) will be motivated in a moment by showing explicitly the contractions, and the explicit expression for the Weingarten functions in terms of the Brauer elements further below.

The Brauer algebra is a mathematical structure that was introduced by Richard Brauer in the 1930s, and it has connections to both algebra and representation theory, and is denoted here by $\mathcal P_{2k}$.
One interesting feature of the Brauer algebra is its connection to pairings in set theory. In particular, the Brauer algebra can be used to study pairings of elements from two sets in a meaningful way. This connection arises from the fact that the Brauer algebra can be used to represent certain operations on partitions, which are combinatorial objects that encode the ways in which a set can be divided into smaller subsets.
Specifically, the Brauer algebra can be used to describe pairings of elements from two sets when the sets have a certain kind of symmetry. In set theory, a pairing is a function or a relation that maps pairs of elements from two sets to a third set. The Brauer algebra provides a way to describe and analyze these pairings in a systematic manner. Examples of Bauer pairings between $2k$ elements are shown in Fig. \ref{fig:brauer}.

\begin{figure}
    \centering
    \includegraphics[scale=0.09]{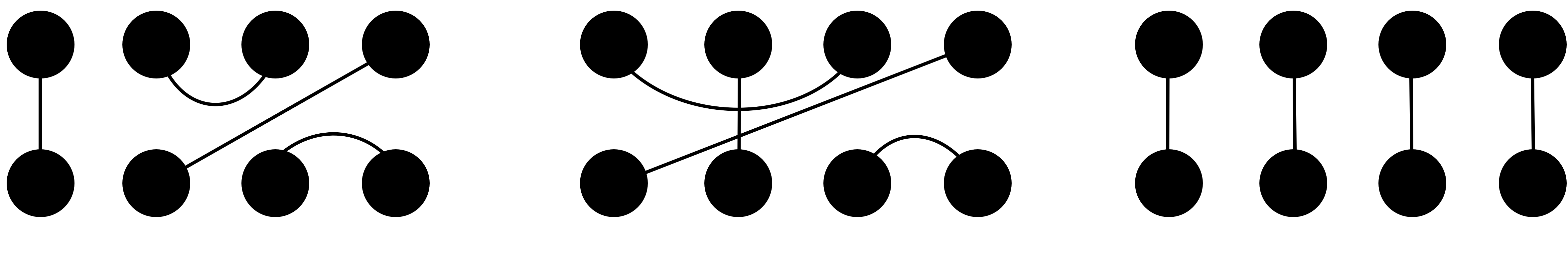}
    \caption{Pairings over 8 elements. Given a set of $2k$ elements, the set of pairings $\mathcal P_{2k}$ is a set of the form $\tau =\{\{i_1,i_2\},\cdots,\{i_{2k-1},i_{2k}\}\}$, with $i_a\neq i_b$. The set can be represented graphically as in the three pairings above. It is common to divide the set in two set of indices, $A_k=\{i_1,\cdots,i_k\},B_k=\{i_{k+1},\cdots, i_{2k}\}$. }
    \label{fig:brauer}
\end{figure}

In the following, we will introduce a coloring scheme to understand better this average. As we can see in Fig. \ref{fig:isotwirl}, the twirling involves external and internal indices, the latter being contrated with between $\mO^t$ $\mOmega$ and $\mO$. We will denote internal indices with a blue color, and external indices with a red color. 
Another way of expressing eqn. (\ref{eq:weingartenmethod}) is by making the indices explicit. 
The average over the orthogonal group is given by the formula
\begin{eqnarray}
    \langle \prod_{r=1}^{2k} \mO_{\textcolor{blue}{i_r}\textcolor{red}{j_r}}\rangle_{\mO}=\sum_{\textcolor{blue}{\sigma},\textcolor{red}{\tau}\in \mathcal P_{2k}} \Delta_{\textcolor{blue}{\sigma}} (\textcolor{blue}{\vec i})\Delta_{\textcolor{red}{\tau}} (\textcolor{red}{\vec j})W_g^O(\textcolor{blue}{\sigma}\textcolor{red}{\tau^{-1}},E)\nonumber 
\end{eqnarray}
Above, we have the following definition:
\begin{eqnarray}
\Delta_{\sigma}(\vec i)=\prod_{\{a,b\}\in \sigma} \delta_{i_a,i_b}.    
\end{eqnarray}
The function above is an index explicit definition of the operators $\mathbb B_{\sigma}'s$ involved in eqn. (\ref{eq:weingartenmethod}).
Using the formula above, the isospectral twirling is written in the form
\begin{widetext}
\begin{eqnarray}
    (\hat{\mathcal{R}}_{\mathbb O}^{(2k)})_{\textcolor{red}{j_1\cdots j_k}}^{\textcolor{red}{j_{k+1}\cdots j_{2k}}}(\mOmega)&=&\sum_{\textcolor{blue}{i_1,\cdots,i_{2k}}} \langle \mO_{\textcolor{blue}{i_1}\textcolor{red}{j_1}}\cdots \mO_{\textcolor{blue}{i_k}\textcolor{red}{j_k}} \mOmega_{\textcolor{blue}{i_1}\textcolor{blue}{i_{k+1}}}\cdots \mOmega_{\textcolor{blue}{i_k}\textcolor{blue}{i_{2k}}} \mO_{\textcolor{blue}{i_1}\textcolor{red}{j_{k+1}}}\cdots \mO_{\textcolor{blue}{i_{k}}\textcolor{red}{j_{2k}}} \rangle_{\mO}\nonumber \\
    &=&\sum_{\textcolor{blue}{i_1,\cdots,i_{2k}}} \langle \mO_{\textcolor{blue}{i_1}\textcolor{red}{j_1}}\cdots \mO_{\textcolor{blue}{i_k}\textcolor{red}{j_k}} \mO_{\textcolor{red}{i_1}\textcolor{blue}{j_{k+1}}}\cdots \mO_{\textcolor{red}{i_{k}}\textcolor{blue}{j_{2k}}} \rangle_{\mO} \mOmega_{\textcolor{red}{i_1}\textcolor{red}{i_{k+1}}}\cdots \mOmega_{\textcolor{red}{i_k}\textcolor{red}{i_{2k}}}\nonumber \\
    &=&\sum_{\textcolor{blue}{i_1,\cdots,i_{2k}}} \sum_{\textcolor{blue}{\sigma},\textcolor{red}{\tau}\in \mathcal P_{2k}} \Delta_{\textcolor{blue}{\sigma}} (\textcolor{blue}{\vec i})\Delta_{\textcolor{red}{\tau}} (\textcolor{red}{\vec j})W_g^O(\textcolor{blue}{\sigma}\textcolor{red}{\tau^{-1}},E)\mOmega_{\textcolor{blue}{i_1}\textcolor{blue}{i_{k+1}}}\cdots \mOmega_{\textcolor{blue}{i_k}\textcolor{blue}{i_{2k}}}\nonumber \\
    &=& \sum_{\textcolor{red}{\tau}\in \mathcal P_{2k}} \Delta_{\textcolor{red}{\tau}} (\textcolor{red}{\vec j})\sum_{\textcolor{blue}{i_1,\cdots,i_{2k}}}\sum_{\textcolor{blue}{\sigma}\in \mathcal P_{2k}}\Delta_{\textcolor{blue}{\sigma}} (\textcolor{blue}{\vec i})W_g^O(\textcolor{blue}{\sigma}\textcolor{red}{\tau^{-1}},E)\mOmega_{\textcolor{blue}{i_1}\textcolor{blue}{i_{k+1}}}\cdots \mOmega_{\textcolor{blue}{i_k}\textcolor{blue}{i_{2k}}}\nonumber \\
     &=& \sum_{\textcolor{red}{\tau}\in \mathcal P_{2k}} \Delta_{\textcolor{red}{\tau}} (\textcolor{red}{\vec j}) R_E (\textcolor{red}{\tau},\mOmega)
\end{eqnarray}
\end{widetext}
where we defined 
\begin{eqnarray}
    R_{E} (\textcolor{red}{\tau},\mOmega)&=&\sum_{\textcolor{blue}{i_1,\cdots,i_{2k}}}\sum_{\textcolor{blue}{\sigma}\in \mathcal P_{2k}}\Delta_{\textcolor{blue}{\sigma}} (\textcolor{blue}{\vec i})W_g^O(\textcolor{blue}{\sigma}\textcolor{red}{\tau^{-1}},E)\cdot\nonumber \\
    && \hspace{2cm} \cdot\mOmega_{\textcolor{blue}{i_1}\textcolor{blue}{i_{k+1}}}\cdots \mOmega_{\textcolor{blue}{i_k}\textcolor{blue}{i_{2k}}}.
    \label{eqapp:isotwird}
\end{eqnarray}
We see then that one of the summation over the Brauer's algebra elements is associated with external indices (red, $\tau$), while the other is associated with internally contracted indices (blue, $\sigma$). The Weingarten function depends instead only on the element $\alpha=\sigma\tau^{-1}$. The element $\alpha$ is essentially the \textit{superposition} of $\sigma$ and $\tau$, for instance, in Fig. \ref{fig:brauer}, we simply overlay two Brauer elements on top of each other. To obtain the Weingarten matrix elements, we first construct the Gram matrix $Gm(\sigma,\tau)$ as follows \cite{zinnjustin,collins2006integration}. At the order $2k$, let $m$ be the number of Brauer' pairings. The Gram matrix is $m\times m$, and in each corresponding matrix elements associated to $\sigma,\tau$ we have $E^{c(\sigma,\tau)}$, where $c(\sigma,\tau)$ is the number of closed cycles obtained by overlaying $\sigma$ and $\tau$ on the same diagram. Then, the Weingarten matrix $W_g^O(\sigma\tau^{-1},E)$ is the corresponding element of the pseudo-inverse of the matrix $Gm(\sigma,\tau)$.

We now have all the elements to perform the average. We need to see the effect of the two summations and the delta function on the indices of the product of $\mOmega$'s.
Let us first consider the case $k=1$. In this case, there are only two elements, and thus the only available pairing is $\textcolor{blue}{\sigma},\textcolor{red}{\tau}=\{1,2\}$.

Since there is only one component, we have $W_g^O(\sigma,E)=\frac{1}{E}$, and thus
\begin{eqnarray}
    R_E(\textcolor{red}{\tau},\mOmega)&&=\sum_{\textcolor{blue}{i_1,i_2}} \frac{\delta_{\textcolor{blue}{i_1,i_2}}}{E} \mOmega_{\textcolor{blue}{i_1 i_2}}=\frac{1}{E} \sum_{\textcolor{blue}{i_1}} \mOmega_{\textcolor{blue}{i_1i_1}}=\frac{\Tr(\mOmega)}{E}\nonumber 
\end{eqnarray}
Thus
\begin{eqnarray}
(\hat{\mathcal{R}}_{\mathbb O}^{(2)})_{\textcolor{red}{j_1}}^{\textcolor{red}{j_{2}}}(\mOmega)=\delta_{\textcolor{red}{j_1 j_2}}\frac{\Tr(\mOmega)}{E}.\label{eqapp:firstorder}
\end{eqnarray}
This is shown in Fig. \ref{fig:IsoTwirlk1k2} (top).

Let us now focus on the case $k=2$. This is already not as simple as the case of $k=1$. The pairing on $4$ elements are now 3: $\sigma_1=\{\{1,3\},\{2,4\}\}$, $\sigma_2=\{\{1,4\},\{2,3\}\}$, $\sigma_3=\{\{1,2\},\{3,4\}\}$. Thus, as we see in Fig.  \ref{fig:IsoTwirlk1k2} (bottom), the average is projected over three operators, and we have
\begin{eqnarray}
    (\hat{\mathcal{R}}_{\mathbb O}^{(4)})_{\textcolor{red}{j_1 j_2}}^{\textcolor{red}{j_{3}j_{4}}}(\mOmega)&=&R_E(\textcolor{red}{\tau_1},\mOmega) \delta_{\textcolor{red}{j_1,j_3}}\delta_{\textcolor{red}{j_2,j_4}} \nonumber \\
    &&+R_E(\textcolor{red}{\tau_2},\mOmega) \delta_{\textcolor{red}{j_1,j_4}}\delta_{\textcolor{red}{j_2,j_3}} \nonumber \\
    &&+R_E(\textcolor{red}{\tau_3},\mOmega) \delta_{\textcolor{red}{j_1,j_2}}\delta_{\textcolor{red}{j_3,j_4}}
\end{eqnarray}
where we now  need to calculate
\begin{eqnarray}
    R_E(\textcolor{red}{\tau_j},\mOmega)=\sum_{\textcolor{blue}{i_1,\cdots,i_{4}}}\sum_{\textcolor{blue}{\sigma}\in \mathcal P_{2k}}\Delta_{\textcolor{blue}{\sigma}} (\textcolor{blue}{\vec i})W_g^O(\textcolor{blue}{\sigma}\textcolor{red}{\tau_j^{-1}},E)\mOmega_{\textcolor{blue}{i_1}\textcolor{blue}{i_{3}}}\mOmega_{\textcolor{blue}{i_2}\textcolor{blue}{i_{4}}}
\end{eqnarray}
As we have seen in the previous example, the quantity $\sum_{\textcolor{blue}{i_1,\cdots,i_{4}}}\Delta_{\textcolor{blue}{\sigma}} (\textcolor{blue}{\vec i})$ results in contractions over the indices $\textcolor{blue}{i_j}'s$, and the rest are summed over.  These thus become immediate that these are traces and products of traces of $\mOmega$.
 The contractions are shown in Fig. \ref{fig:IsoTwirlk2Om}. These corresponds to the elements $\textcolor{blue}{\sigma_1}\rightarrow \Tr(\mOmega)^2$, $\textcolor{blue}{\sigma_2}\rightarrow \Tr(\mOmega^2)$, $\textcolor{blue}{\sigma_3}\rightarrow \Tr(\mOmega^2)$, where we used the fact that $\mOmega$ is a symmetric matrix. However, these must be multiplied by the orthogonal Weingarten functions, which are now non-trivial. The table of the Weingarten coefficients can be obtained by composing $\textcolor{red}{\sigma}$ and $\textcolor{blue}{\tau}$. This is simplified by the observation that $\tau^{-1}=\tau$ since these are pairings. 
 The orthogonal Weingarten function is the pseudo-inverse of the Gram matrix, whose elements $Gm(\textcolor{red}{\sigma},\textcolor{blue}{\tau})=E^{c(\textcolor{red}{\sigma},\textcolor{blue}{\tau})}$, where $c(\textcolor{red}{\sigma},\textcolor{blue}{\tau})$ is the number of connected components of the graph resulting from the composition of $\textcolor{red}{\sigma},\textcolor{blue}{\tau}$. The number of connected components is given, inspecting Figure \ref{fig:wgo2}, by the Gram matrix below
\begin{eqnarray}
    Gm(\sigma,\tau)=\left(
\begin{array}{ccc}
 E^2 & E & E \\
 E & E^2 & E \\
 E & E & E^2 \\
\end{array}
\right)
\end{eqnarray}
We thus find that, ordering the columns and rows according to $\tau_1,\tau_2,\tau_3$ described above, the inverse of the Gram matrix yields the following Weingarten matrix for $k=2$.
\begin{eqnarray}
    &&W_g^O(\textcolor{blue}{\sigma}\textcolor{red}{\tau}^{-1},E)\nonumber \\
    &&\ \ \ \ =\left(
\begin{array}{ccc}
 \frac{E+1}{(E-1) E (E+2)} & -\frac{1}{(E-1) E (E+2)} & -\frac{1}{(E-1) E (E+2)} \\
 -\frac{1}{(E-1) E (E+2)} & \frac{E+1}{(E-1) E (E+2)} & -\frac{1}{(E-1) E (E+2)} \\
 -\frac{1}{(E-1) E (E+2)} & -\frac{1}{(E-1) E (E+2)} & \frac{E+1}{(E-1) E (E+2)} \\
\end{array}
\right)\nonumber 
\end{eqnarray}
The Weingarten matrix is a symmetric matrix with identical elements on the diagonal, and for $k=2$ identical elements on the off-diagonal.
Let us call $C_1=\frac{E+1}{(E-1) E (E+2)}$ the element on the diagonal, and $C_2=-\frac{1}{ E (E-1)(E+2)}$ the elements off-diagonal.

\begin{eqnarray}
    W_g^O=\begin{pmatrix}
        C_1 & C_2 & C_2 \\
        C_2 & C_1 & C_2 \\
        C_2 & C_2 & C_1
    \end{pmatrix}
\end{eqnarray}
Then,  we have
\begin{eqnarray}
&&\ \ \ \  \ \textcolor{blue}{\sigma_1} \hspace{1.3cm}\textcolor{blue}{\sigma_2} \hspace{1.5cm}\textcolor{blue}{\sigma_3} \nonumber \\
    R_1\equiv R_E(\textcolor{red}{\tau_1},\mOmega)&=& C_1  \Tr(\mOmega)^2+ C_2 \Tr(\mOmega^2)+ C_2 \Tr(\mOmega^2) \nonumber \\
R_2\equiv R_E(\textcolor{red}{\tau_2},\mOmega)&=&C_2  \Tr(\mOmega)^2+ C_1 \Tr(\mOmega^2) +C_2   \Tr(\mOmega^2) \nonumber \\
    R_3\equiv R_E(\textcolor{red}{\tau_3},\mOmega)&=&C_2  \Tr(\mOmega)^2+ C_2 \Tr(\mOmega^2) + C_1 \Tr(\mOmega^2)\nonumber 
\end{eqnarray}
from which we
\begin{eqnarray}
    (\hat{\mathcal{R}}_{\mathbb O}^{(4)})_{\textcolor{red}{j_1 j_2}}^{\textcolor{red}{j_{3}j_{4}}}(\mOmega)&=&(C_1  \Tr(\mOmega)^2+ 2C_2 \Tr(\mOmega^2)) \delta_{\textcolor{red}{j_1,j_3}}\delta_{\textcolor{red}{j_2,j_4}}\nonumber \\
    &&\hspace{-0.1cm}+\big((C_1+C_2)\Tr(\mOmega^2)+C_2 \Tr(\mOmega)^2\big)\delta_{\textcolor{red}{j_1,j_4}}\delta_{\textcolor{red}{j_2,j_3}}\nonumber \\
    &&\hspace{-0.1cm}+\big((C_1+C_2)\Tr(\mOmega^2)+C_2 \Tr(\mOmega)^2\big)\delta_{\textcolor{red}{j_1,j_2}}\delta_{\textcolor{red}{j_3,j_4}}\nonumber \\
    \label{eqapp:secondorder}
\end{eqnarray}
Note that we used here the notation in which input indices are at the bottom, while output indices are at the top. We note that $\delta_{\textcolor{red}{j_1,j_3}}\delta_{\textcolor{red}{j_2,j_4}}\equiv \mI^{\otimes 2}$, $\delta_{\textcolor{red}{j_1,j_4}}\equiv \mathbb S$ while $\delta_{\textcolor{red}{j_1,j_2}}\delta_{\textcolor{red}{j_3,j_4}}\equiv  \Pi$, and information theory $\Pi$ is proportional to the projector onto the maximally entangled Bell state between the two copies of a Hilbert space $\mathcal H$.
In the bra-ket notation, if one defines $|\Phi^+\rangle$ as  the Bell state between the two Hilbert spaces, 
\begin{eqnarray}
    |\Phi^+\rangle=\frac{1}{\sqrt{E}}\sum_{j_1=0,j_2=0}^{E-1} |j_1,j_1\rangle\langle j_2,j_2|, 
\end{eqnarray}
then one has $\Pi= E |\Phi^+\rangle\langle \Phi^+|$.

\begin{figure}
    \centering
    \includegraphics[scale=0.24]{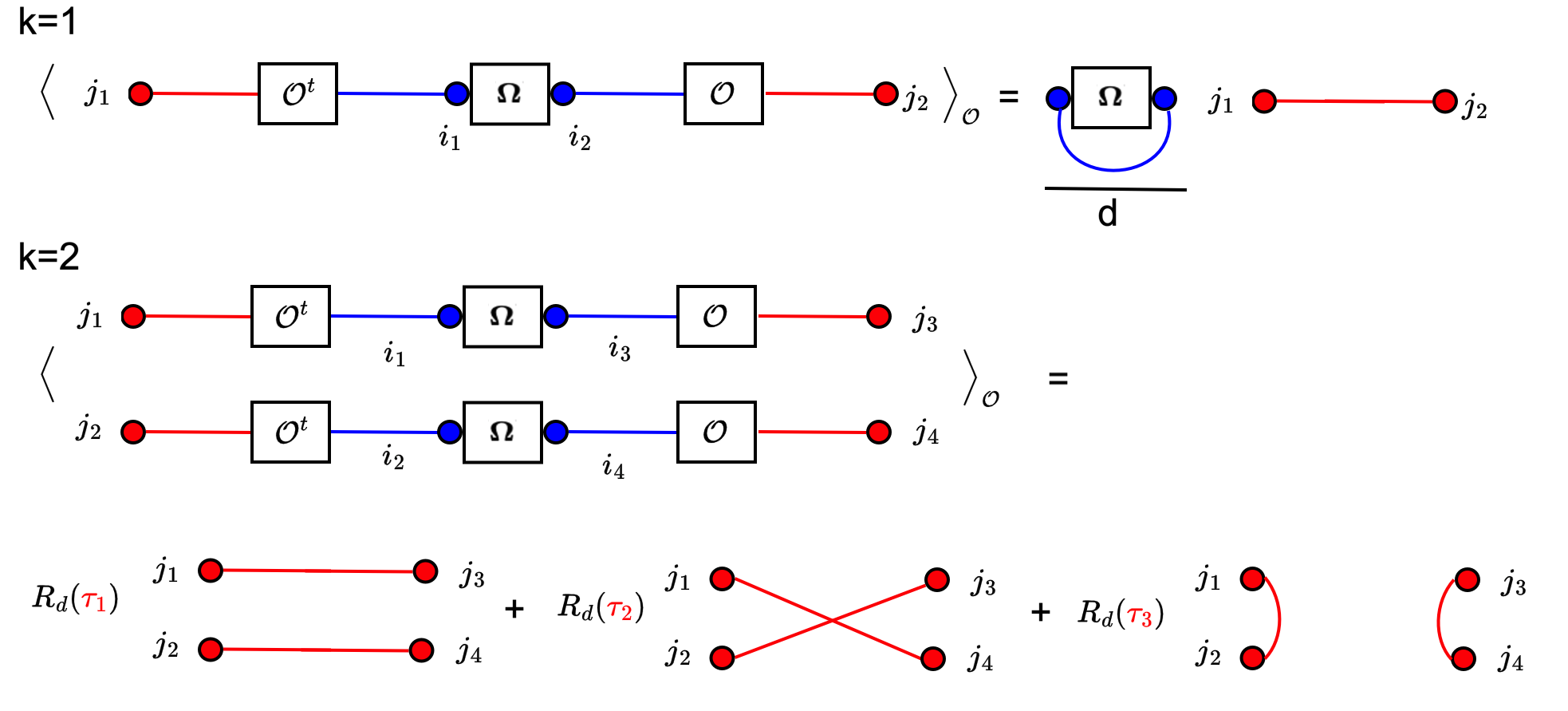}
    \caption{Graphical representation of the orthogonal twirling at the first and second order. At the first, the only contribution is proportional to the the identity operator.}
    \label{fig:IsoTwirlk1k2}
\end{figure}

\begin{figure}
    \centering
    \includegraphics[scale=0.25]{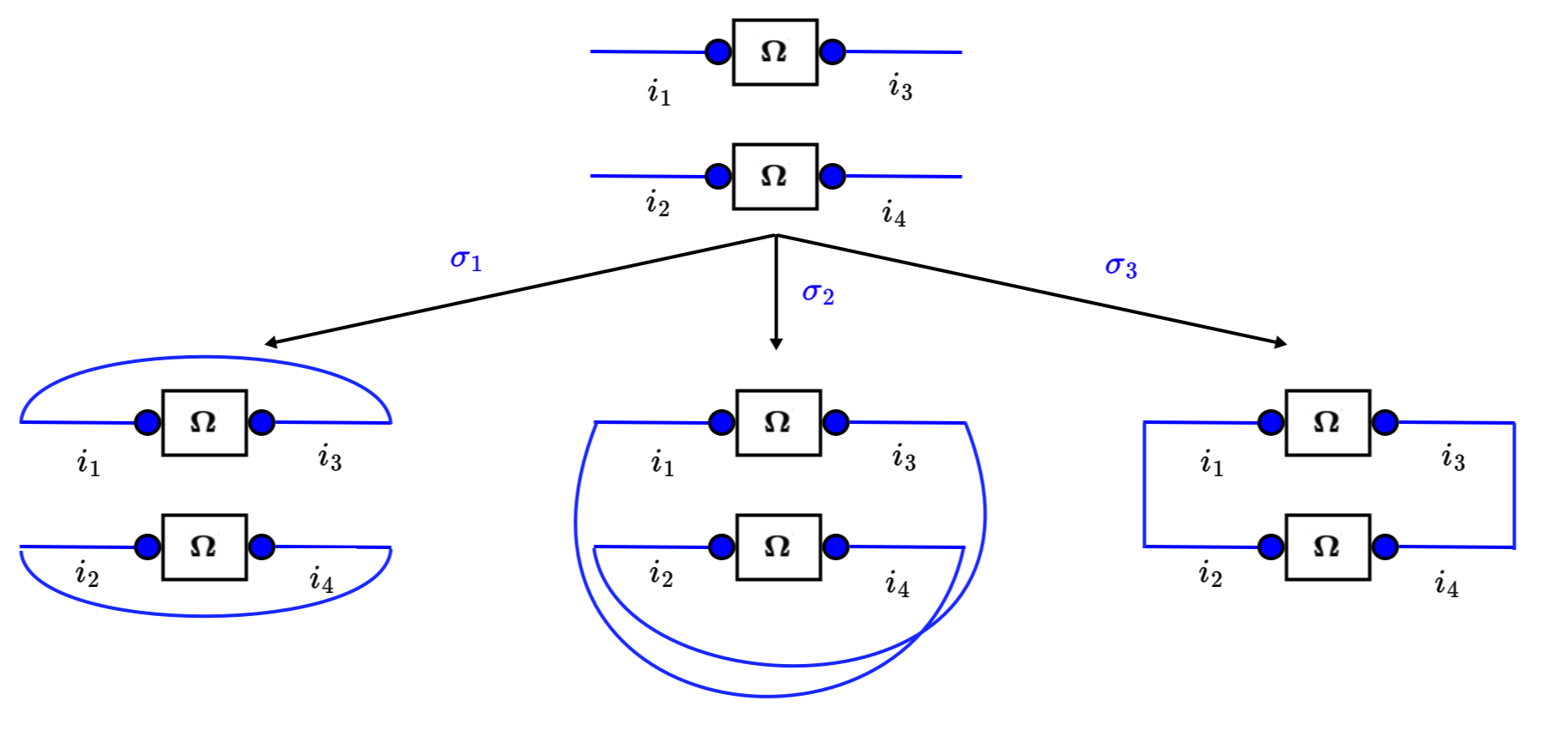}
    \caption{Contractions of the internal lines at the second order of the twirling. We see that given the pairing $\sigma_1$ contributes a term $\Tr(\mOmega)^2$, while $\sigma_2$ and $\sigma_3$ both contribute $\Tr(\mOmega^2)$.}
    \label{fig:IsoTwirlk2Om}
\end{figure}


\begin{figure}[hb!]
    \centering
    \includegraphics[scale=0.3]{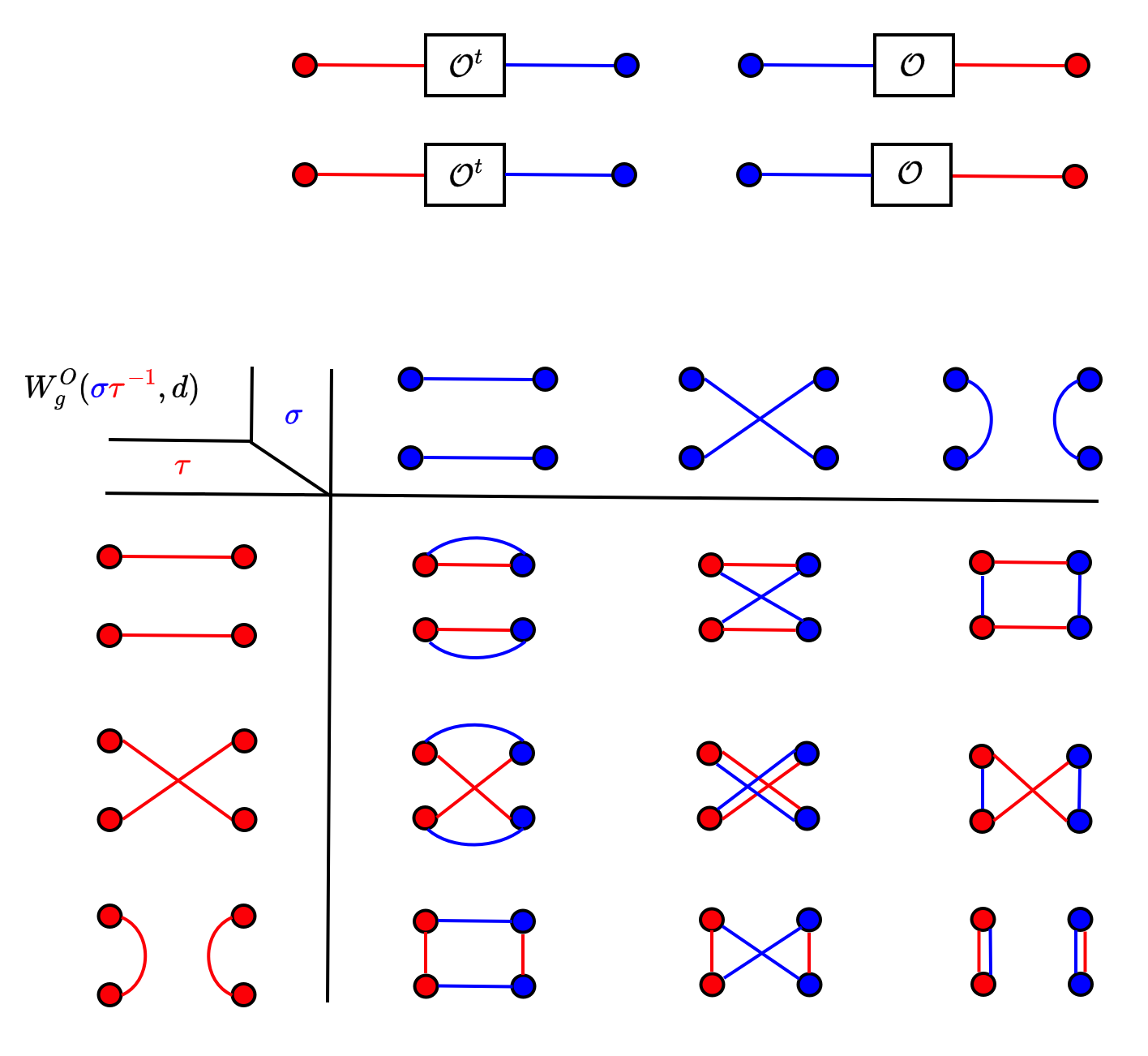}
    \caption{The Gram matrix $Gm$ from which the Weingarten coefficients are derived via a pseudo-inverse. The element $Gm(\sigma,\tau)$ is the number of loops produced by superposing $\sigma,\tau$ \cite{zinnjustin}. }
    \label{fig:wgo2}
\end{figure}

\subsection{Perturbative average}
Obtaining exact expressions up to an arbitrary order is rather cumbersome. For this reason, let us focus on  expressions up to order $k=2$.
In this case, we have

\begin{eqnarray}
    \langle \frac{d \vec g}{d\tau}\rangle_{\mO}&=&  \vec s \nonumber \\ 
    &+&\chi e^{\tau}  \langle\mOt \mathbf \mOmega \mO\rangle_{\mO} \mathbb G \vec s \nonumber \\
    &+&\chi^2e^{2\tau}\Tr_{2} \big( \langle(\mOt)^{\otimes 2} \mathbf \mOmega^{\otimes 2} \mO^{\otimes 2} \rangle_{\mO}\mathbb G^{\otimes 2}\mathbb S\big)\vec s \nonumber \\
    &&+O(\chi ^3)\nonumber 
    \label{eq:secondocc}
\end{eqnarray}

We see in the expression above that in order to perform this calculation, we need to calculate explicitly
\begin{eqnarray}
    \langle\mOt \mathbf \mOmega \mO\rangle_{\mO}&=& \mathbb M_1 \nonumber \\
    \langle(\mOt)^{\otimes 2} \mathbf \mOmega^{\otimes 2} \mO^{\otimes 2} \rangle_{\mO}&=& \mathbb M_2 
\end{eqnarray}

Using the methodology described in the previous section, it is easy to see that $\mathbb M_1=\frac{\Tr (\mOmega)|}{E}\mI$.

We consider the average up to the second order in $\chi$, 
eqn. (\ref{eq:secondocc}). Let us note that since $\mOmega^2=\mOmega$, we have $\Tr(\mOmega^2)=\Tr(\mOmega)$.

From eqn. (\ref{eqapp:secondorder}), we know instead that
\begin{eqnarray}
    \mathbb M_2= R_1 \mI^{\otimes 2}+ R_2 \mathbb S+R_3 \Pi.
\end{eqnarray}
The result of these contractions can be seen in a graphical representation in Fig. \ref{fig:secondordertwirl}.
Inserting these expressions into eqn.(\ref{eq:secondocc2}), we obtain
\begin{widetext}
\begin{eqnarray}
    \langle \frac{d \vec g}{d\tau}\rangle_{\mO}&=&  (\mI+\chi e^{\tau}  \frac{\Tr(\mOmega)}{E} \mathbb G)\vec s  +\chi^2e^{2\tau}\Tr_{2} \big((R_1 \mI^{\otimes 2}+ R_2 \mathbb S+R_3 \Pi) \mathbb G^{\otimes 2}\mathbb S_{12}\big)\vec s +O(\chi ^3)\nonumber \\&=&(\mI+\chi e^{\tau}  \frac{\Tr(\mOmega)}{E} \mathbb G)\vec s+\chi^2 e^{2 \tau}\Big((R_1+R_3) \mathbb G^2 + R_2 \Tr(\mathbb G)\mathbb G \Big)\vec s+O(\chi^3) \\
    &\approx &(\mI+\chi e^{\tau}  \frac{\Tr(\mOmega)}{E} \mathbb G)\vec s\nonumber \\
    &+&\chi^2 e^{2 \tau}\Big(((C_1+C_2) \Tr(\mOmega)^2+(3C_2+C_1)\Tr(\mOmega)) \mathbb G^2 + (C_2\Tr(\mOmega)^2+(C_1+C_2)\Tr(\mOmega)) \Tr(\mathbb G)\mathbb G \Big)\vec s \\
    \label{eq:secondocc2}
\end{eqnarray}
\end{widetext}

Using the assumption about the size of the system, e.g. that $E\gg 1$, and the fact that the graph is loop dense, e.g. $\Tr(\mOmega)=L \gg 1$, we have $C_1\approx \frac{1}{E^2}$ and $C_2\approx -\frac{1}{E^3}$, From which we get

\begin{eqnarray}
    \lim_{E\gg 1,L \gg 1}\langle \frac{d \vec g}{d\tau}\rangle_{\mO}
    &\approx &(\mI+\chi e^{\tau}  \frac{L}{E} \mathbb G)\vec s\nonumber \\
    &&+\chi^2 e^{2 \tau}\Big( (\frac{L^2}{E^2}+\frac{L}{E^2})\mathbb G^2  \nonumber \\
    &&\ \ \ + (- \frac{L^2}{E^3}+\frac{L}{E^2}) \Tr(\mathbb G)\mathbb G \Big)\vec s 
    \label{eqapp:secondocc}
\end{eqnarray}

We now define $\bar x(t)=\frac{1}{E} \sum_{j=1}^E x_i (t)$. It is easy to see then that $e^{\tau}\Tr(\mathbb G)/E=\bar x$. Summing the left and right equation, and performing the mean field replacement $\vec s\rightarrow \bar s \vec 1$, we then obtain the mean-field approximation, using the fact that $L/E^2\rightarrow 0$, we obtain
\begin{eqnarray}
    \lim_{E\gg 1,L \gg 1}\langle \frac{d \vec x}{d\tau}\rangle_{\mO}
    &\approx &(\mI+\chi   \frac{L}{E} \mX)\vec s\nonumber \\
    &+&\chi^2 \frac{L^2}{E^2}\Big( \mX^2 -  \frac{\Tr(\mX)}{E}\mX \Big)\vec s 
    \label{eqapp:thirdcc}
\end{eqnarray}
From summing over each index on the left, and dividing by $E$, we obtain
\begin{eqnarray}
    \frac{d \bar x}{dt}\approx \big(1+\chi \frac{L}{E} \bar x+ \chi^2 \frac{L^2}{E^2} (\overline{x^2} -{\bar x}^2)\big)\bar s
\end{eqnarray}
where $\overline{x^2}=\frac{1}{E}\sum_i x_i^2$. We introduce $\text{var}(x)=\overline {x^2}-{\bar x}^2$. Using this definition, we then have, for $E\gg 1$, that, introducing $\rho=\frac{L}{E}$, that

\begin{eqnarray}
    \frac{d \bar x}{dt}\approx \Big(1+\rho\chi  \bar x+ \chi^2 \rho^2 \text{var}(x) \Big)\bar s
\end{eqnarray}
which is the first equation used in the main text.
We see however that this equation is not closed, as it involves implicitly $\overline{x^2}$.
We then note that
\begin{eqnarray}
   \frac{d}{dt} (\vec x^2) =\frac{d}{dt} \mX \vec x&=&2 \mX \frac{d}{dt} \vec x\approx2(\mX+\chi   \frac{L}{E} \mX^2)\vec s\\
   &+&2\chi^2 \frac{L^2}{E^2}\Big( \mX^3 -  \frac{\Tr(\mX)}{E}\mX^2 \Big)\vec s 
\end{eqnarray}
From which we get, using the mean field approximation again for $\vec s $, and summing cleverly on the left, we get
\begin{eqnarray}
    \frac{d}{dt} \overline{x^2}\approx 2(\bar x+\chi \rho \overline{x^2})\bar s+2 \chi^2 \rho^2 (\overline{x^3}-\overline{x^2}\overline{x})  \bar s
\end{eqnarray}
which now depends on higher moments, implying a tower of coupled equations. We can impose closure by assuming that
\begin{eqnarray}
    \frac{d}{dt} \overline{x^k}=0.
\end{eqnarray}
 We will imposing closure at $k=3$, imposing $\overline{x^3}=r_3$ constant.
Note also that
\begin{eqnarray}
    \frac{d}{dt} \bar x^2&=&2\bar x\frac{d}{dt}\bar x\approx 2\bar x\Big(1+\rho\chi  \bar x+ \chi^2 \rho^2 \text{var}(x) \Big)\bar s \nonumber \\
    &=&2\Big(\bar x+\rho\chi  \bar x^2+ \bar x\chi^2 \rho^2 \text{var}(x) \Big)\bar s
\end{eqnarray}
and thus, using $\text{var}(x)=\overline{x^2}-\bar x^2$, we get
\begin{eqnarray}
    \frac{d}{dt} \text{var}(x)&=&2(\bar x+\chi \rho (\text{var}(x)+\bar x^2))\bar s\nonumber \\
    &+&2 \chi^2 \rho^2 (r_3-(\text{var}(x)+\bar x^2)\overline{x})  \bar s \nonumber \\
    &-&2\Big(\bar x+\rho\chi  \bar x^2+ \bar x\chi^2 \rho^2 \text{var}(x) \Big)\bar s\nonumber \\
    &=&2 \rho \chi \bar s\Big(\text{var}(x)(1-3 \bar x\rho \chi)+2 \rho \chi (r_3-\bar x^3)\Big)\nonumber \\
\end{eqnarray}

Thus, we obtain the set of coupled differential equations
\begin{eqnarray}
    \frac{d \bar x}{dt}&\approx& \Big(1+\rho\chi  \bar x+ \chi^2 \rho^2 v \Big)\bar s \label{eq:eqsbarxvarx1} \\
    \frac{dv}{dt} &\approx&2 \rho \chi \bar s\Big((1-3 \bar x\rho \chi)v+2 \rho \chi (r_3-\bar x^3)\Big)
    \label{eq:eqsbarxvarx2}
\end{eqnarray}

The question is whether now these equations have physical fixed points. However, it is not hard to see that there is no attractive fixed point in the dynamical system at finite $\bar x$ and $v$. This is shown in Fig. \ref{fig:varx}, where we plot the phase portrait of the dynamics. Note that the dynamics should be constrained in the box $\bar x\in [0,1]$. Moreover, 
from the Bhatia-Davis inequality \cite{bhatiadavis}, since $0\leq g_i\leq 1$, we have from the identity $\text{var}(g)=\frac{1}{E}(\Tr(\mathbb G^2)-\Tr(\mathbb G)^2)$
\begin{eqnarray}
    \text{var}(g)\leq (1-\bar g)\bar g\leq 1.
\end{eqnarray}
In particular
\begin{eqnarray}
    \frac{1}{E} \sum_i g_i^2 \leq \bar g,
\end{eqnarray}
or
\begin{eqnarray}
    \Tr(\mathbb G^2)\leq \Tr(\mathbb G).
\end{eqnarray}
However, the trajectories are not constrained to this box, and thus this must imposed in the variance. More importantly, we can see that the only fixed point is a saddle, which is unphysical as we know that there must be an attractive fixed point. Modifying the parameters of the equation only modifies the location of such a fixed point, but not the spectrum of the Jacobian.
We interpret this failure as the necessity to obtain non-perturbative results, which however we can only obtain using the asymptotics of the Weingarten calculus.

\begin{figure}
    \centering
    \includegraphics[scale=0.6]{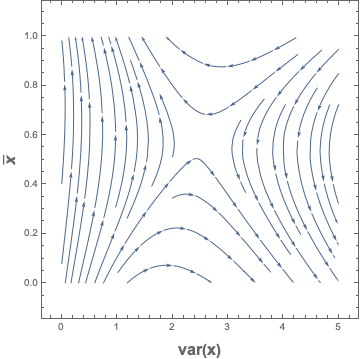}
    \caption{Phase portrait for the dynamical system of eqns. (\ref{eq:eqsbarxvarx1}-\ref{eq:eqsbarxvarx2}), for $r_3=0.4$, $\rho=\chi=0.8$, $s_0=0.1$. As we can see there is a saddle fixed point at an unphysical value of the variance. }
    \label{fig:varx}
\end{figure}

\subsection{Asymptotic regime}
As we have seen, the asymptotic results obtained in perturbation theory expanding in the power of $\chi$ lead to interesting but unsatisfactory results.
In order to obtain non-perturbative results, we are then forced to consider the asymptotic results in the Weingarten calculus. In particular, we will use the following result 
 \cite{COLLINSas}:


\begin{eqnarray}
    \lim_{E\rightarrow \infty} W_g^O(\textcolor{blue}{\sigma}\textcolor{red}{\tau^{-1}},E)=\frac{1}{E^k} \delta_{\textcolor{blue}{\sigma},\textcolor{red}{\tau}}+O\big(\frac{1}{E^{k+1}}\big).
\end{eqnarray}
This can be intuitively obtained from the definition of the Gram matrix. The dominant elements of the Gram matrix are in fact on the diagonal, as these always contribute $E^k$, while the off-diagonal elements are of the form $E,E^2,\cdots,E^{k-1}$.

Inserting this expression into eqn. (\ref{eqapp:isotwird}), we then obtain that \cite{banica}
\begin{eqnarray}
    R^{as}_{E} (\textcolor{red}{\tau},\mOmega)&=&\sum_{\textcolor{blue}{i_1,\cdots,i_{2k}}}\sum_{\textcolor{blue}{\sigma}\in \mathcal P_{2k}}\Delta_{\textcolor{blue}{\sigma}} (\textcolor{blue}{\vec i})\frac{1}{E^k} \delta_{\textcolor{blue}{\sigma},\textcolor{red}{\tau}}\mOmega_{\textcolor{blue}{i_1}\textcolor{blue}{i_{k+1}}}\cdots \mOmega_{\textcolor{blue}{i_k}\textcolor{blue}{i_{2k}}} \nonumber\\
    &=&\sum_{\textcolor{blue}{i_1,\cdots,i_{2k}}}\Delta_{\textcolor{red}{\tau}} (\textcolor{blue}{\vec i})\frac{1}{E^k} \mOmega_{\textcolor{blue}{i_1}\textcolor{blue}{i_{k+1}}}\cdots \mOmega_{\textcolor{blue}{i_k}\textcolor{blue}{i_{2k}}}
    \label{eqapp:isotwirdas}
\end{eqnarray}
\begin{eqnarray}
    &&(\hat{\mathcal{R}}_{\mathbb O}^{(2k)})_{\textcolor{red}{j_1\cdots j_k}}^{\textcolor{red}{j_{k+1}\cdots j_{2k}}}(\mOmega)\nonumber \\
     &&\hspace{1cm}= \sum_{\textcolor{red}{\tau}\in \mathcal P_{2k}} \Delta_{\textcolor{red}{\tau}} (\textcolor{red}{\vec j}) R^{as}_E (\textcolor{red}{\tau},\mOmega)\nonumber \\
     &&\hspace{1cm}=\frac{1}{E^k}\sum_{\textcolor{red}{\tau}\in \mathcal P_{2k}} \Delta_{\textcolor{red}{\tau}} (\textcolor{red}{\vec j}) \sum_{\textcolor{blue}{i_1,\cdots,i_{2k}}}\Delta_{\textcolor{red}{\tau}} (\textcolor{blue}{\vec i}) \mOmega_{\textcolor{blue}{i_1}\textcolor{blue}{i_{k+1}}}\cdots \mOmega_{\textcolor{blue}{i_k}\textcolor{blue}{i_{2k}}}.\nonumber \\
     \label{eqapp:asympre}
\end{eqnarray}

Let us now look at the expressions
\begin{eqnarray}
    &&\Tr_{2\cdots k}\Big(\langle \mO^{t,\otimes k} \mOmega^{\otimes k} \mO^{\otimes k}\rangle_\mO\mathbb G^{\otimes k}\mathbb S_{2\cdots k}\Big)\nonumber \\
    &&\hspace{2cm}=\Tr_{2\cdots k}\Big(\mathbb M_k\mathbb G^{\otimes k}\mathbb S_{2\cdots k}\Big)
\end{eqnarray}
Let us now call $\mathbb B_\tau$ the operator corresponding to the element $\tau$ of the Brauer algebra associated to the contractions $\Delta_{\textcolor{red}{\tau}}(\textcolor{red}{\vec j})$. 
Using eqn. (\ref{eqapp:asympre}), we obtain the asymptotic expression
\begin{eqnarray}
     &&\Tr_{2\cdots k}\Big(\mathbb M_k\mathbb G^{\otimes k}\mathbb S_{2\cdots k}\Big)\nonumber \\
     &&\hspace{1cm}=\frac{1}{E^k}\sum_{\textcolor{red}{\tau}\in \mathcal P_{2k}} \Tr_{2\cdots k}\Big(\mathbb B_{\textcolor{red}{\tau}}\mathbb G^{\otimes k}\mathbb S_{2\cdots k}\Big)\Tr\Big(B_{\textcolor{red}{\tau}}\mOmega^{\otimes k} \Big)\nonumber 
\end{eqnarray}
Now, note that $\mathbb S_{2\cdots k}$ is associated to an element of the Brauer algebra, with pairing $\{\{1,2k\},\{2,k+1\},\{3,k+2\},\cdots,\{k,2k-1\}\}$. Let us call this element $\textcolor{red}{\rho}$. Then, we can write the expression above as
\begin{eqnarray}
     &&\Tr_{2\cdots k}\Big(\mathbb M_k\mathbb G^{\otimes k}\mathbb S_{2\cdots k}\Big)\nonumber \\
     &&\hspace{0.5cm}=\frac{1}{E^k}\sum_{\textcolor{red}{\tau}\in \mathcal P_{2k}} \Tr_{2\cdots k}\Big(\mathbb B_{\textcolor{red}{\rho\tau}}\mathbb G^{\otimes k}\Big)\Tr\Big(B_{\textcolor{red}{\tau}}\mOmega^{\otimes k} \Big)
\end{eqnarray}
Now, note that we can write
\begin{eqnarray}
    \Tr\Big(B_{\textcolor{red}{\tau}}\mOmega^{\otimes k} \Big)=\prod_{s=1}^{c(\textcolor{red}{\tau})}\Tr(\mOmega^{j_{s}(\textcolor{red}{\tau})}), \ \ \ \sum_{s=1}^{c(\tau)}{j_s(\textcolor{red}{\tau})}=k,
\end{eqnarray}
where $c({\textcolor{red}{\tau}})$ is the number of connected components of $\tau$.
Since we have $\mOmega^k=\mOmega$, the expression above is maximized when $c(\textcolor{red}{\tau})=k$. This is true when $\textcolor{red}{\tau}$ is the identity over the Brauer algebra. As a result, we have proven the following 
\begin{proposition} \label{lemma:l3}
     If $\Tr(\mOmega)=L>1$, then $\text{max}_{\tau} \Tr\Big(B_{\tau}\mOmega^{\otimes k} \Big)=\Tr\Big(\mOmega \Big)^k$, obtained for $\tau=e$, the identitity in the Brauer algebra.
\end{proposition}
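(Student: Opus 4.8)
The plan is to combine idempotency of $\mOmega$ with the cycle-counting factorization derived immediately above the statement. The starting point is the identity
\begin{equation}
\Tr\big(B_{\tau}\mOmega^{\otimes k}\big)=\prod_{s=1}^{c(\tau)}\Tr\big(\mOmega^{j_s(\tau)}\big),\qquad \sum_{s=1}^{c(\tau)}j_s(\tau)=k,
\end{equation}
where $c(\tau)$ counts the connected components (loops) obtained by overlaying the pairing $\tau$ with the $k$ edges supplied by $\mOmega^{\otimes k}$, and each loop carries $j_s\geq 1$ copies of $\mOmega$.

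First I would use the projector property $\mOmega^2=\mOmega$, which gives $\mOmega^{j}=\mOmega$ for every integer $j\geq 1$, and hence $\Tr(\mOmega^{j_s})=\Tr(\mOmega)=L$ irrespective of the value of $j_s$. Substituting into the factorization collapses every factor to $L$ and yields the closed form $\Tr(B_{\tau}\mOmega^{\otimes k})=L^{c(\tau)}$, so that the dependence on $\tau$ enters only through the component count $c(\tau)$.

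Next I would reduce the optimization to a combinatorial bound. Since $L=\Tr(\mOmega)>1$, the map $c\mapsto L^{c}$ is strictly increasing, so maximizing the trace is equivalent to maximizing $c(\tau)$. The constraint $\sum_s j_s(\tau)=k$ with every $j_s\geq 1$ forces $c(\tau)\leq k$, with equality exactly when each loop contains a single $\mOmega$-edge. To exhibit a saturating $\tau$, I would take the identity pairing $\tau=e$, which connects each top index $s$ to the matching bottom index $k+s$; then every factor $\mOmega_{i_s i_{k+s}}$ is contracted against $\delta_{i_s i_{k+s}}$ into an independent $\Tr(\mOmega)$, so $c(e)=k$ and $\Tr(B_e\mOmega^{\otimes k})=L^{k}=\Tr(\mOmega)^{k}$. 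Combining the upper bound $c(\tau)\leq k$ with this attained value establishes the claim.

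The only genuinely careful step is justifying that each loop must contain at least one $\mOmega$-edge, so that the decomposition $\sum_s j_s=k$ with $j_s\geq 1$ is legitimate and the bound $c(\tau)\leq k$ is sharp. This follows from observing that the $k$ $\mOmega$-edges and the $k$ edges of $\tau$ are two perfect matchings on the same $2k$ indices; their union is a disjoint collection of cycles, each of which alternates between the two edge types and therefore necessarily includes an $\mOmega$-edge. This structural fact is the main obstacle, but it is essentially the same overlay argument already used to read off the Gram matrix, so I expect it to be routine once stated explicitly.
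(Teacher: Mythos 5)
Your proof is correct and follows essentially the same route as the paper: the paper likewise starts from the factorization $\Tr\big(B_{\tau}\mOmega^{\otimes k}\big)=\prod_{s}\Tr\big(\mOmega^{j_s(\tau)}\big)$, collapses each factor to $L$ via idempotency, and concludes that the maximum $L^{c(\tau)}$ is attained at $c(\tau)=k$, i.e.\ for $\tau=e$. Your added justifications (that $L>1$ is what makes $L^{c}$ increasing, and the alternating-cycle argument guaranteeing $j_s\geq 1$ so that $c(\tau)\leq k$) merely make explicit details the paper leaves implicit.
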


We now want to argue that if $L(E)$ grows as a function of $E$, then we have
\begin{eqnarray}
     \lim_{E\rightarrow \infty} \Tr_{2\cdots k}\Big(\mathbb M_k\mathbb G^{\otimes k}\mathbb S_{2\cdots k}\Big)&=&\frac{L^k}{E^k} \Tr_{2\cdots k}\Big(\mathbb B_{\textcolor{red}{\rho}}\mathbb G^{\otimes k}\Big) \nonumber \\
     &=& \frac{L^k}{E^k}\mathbb G^k+O(\frac{L^{k-1}}{E^k}).
\end{eqnarray}
which is the expression we need in order to derive the large $E$ mean field theory.
\begin{figure}
    \centering
    \includegraphics[scale=0.3]{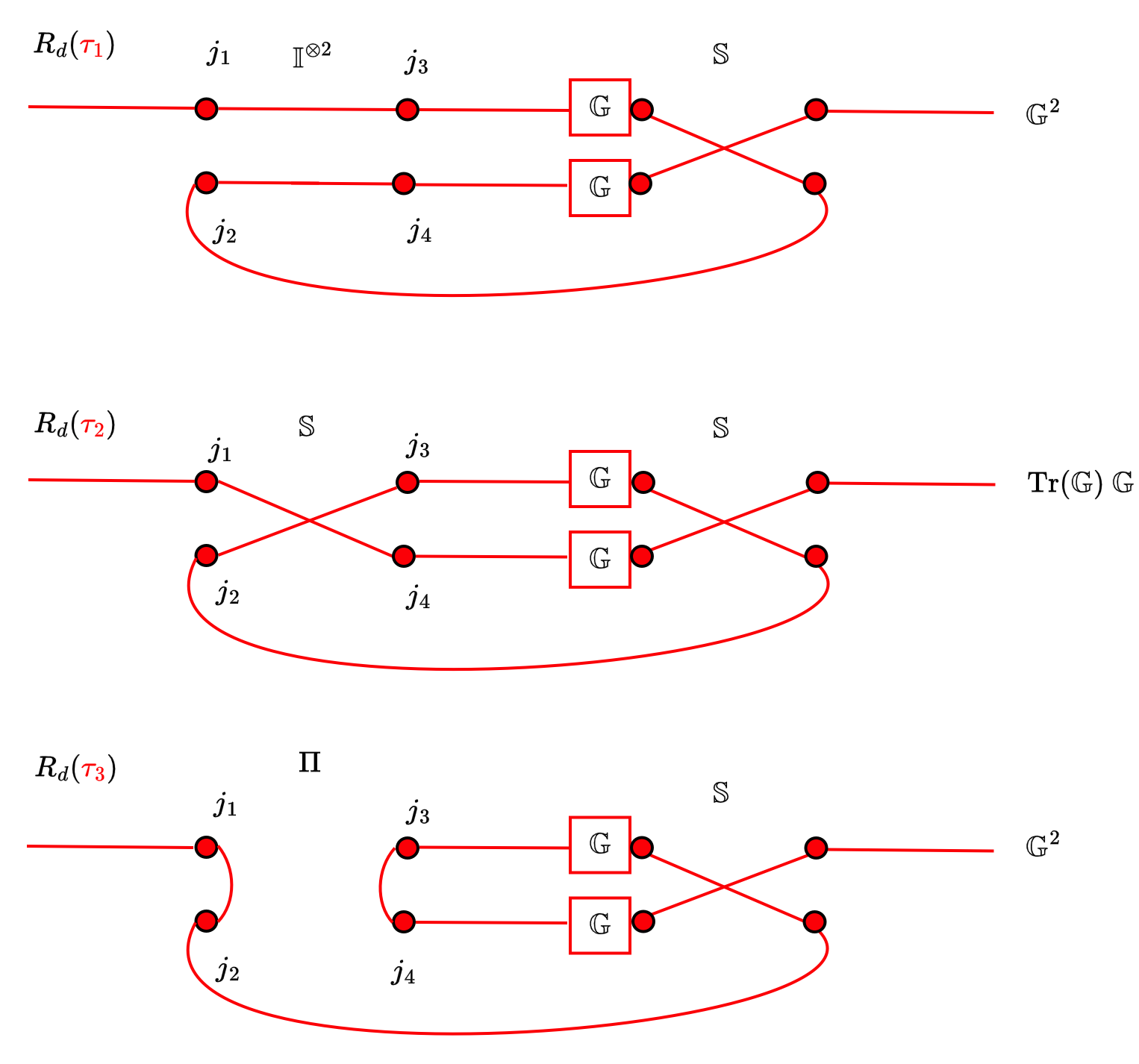}
    \caption{Tensorial representation of the partial trace occurring after the twirl orthogonal average at the order $\chi^2$. As we can see, the first and third terms give $\mathbb G^2$, while the middle term contributes $\Tr(\mathbb G)\mathbb G$.}
    \label{fig:secondordertwirl}
\end{figure}

We now want to show why this is the dominant contribution for each term at the order $\chi^k$. Let us now discuss quantities of the form $\Tr(\mathbb G^j)$.
First, note that we have that $|x_i|\leq 1$, and thus, since $\alpha\geq 0$, we have $|g_i|=|e^{-\alpha t}x_i|\leq 1$ for $t\geq 0$. Note that $\Tr(\mathbb G)= E \bar g$. It then means that 
\begin{eqnarray}
    0\leq \Tr(\mathbb G)\leq E,
\end{eqnarray}
and thus $0\leq \bar g\leq 1.$
Note that in general, $\Tr(\mathbb G^j)=\sum_i g_i^{j}$.

\begin{proposition}
   Let $\chi<1$, $0\leq |\mathbb X_{ii}|\leq 1$. Then
\begin{eqnarray}
    \lim_{E \rightarrow \infty}\langle (I-\chi \mO^t \mOmega \mO \mX)^{-1}\rangle_{\mO} = (I-\chi \langle\mO^t \mOmega \mO\rangle_{\mO} \mX)^{-1}. \nonumber
\end{eqnarray}
\end{proposition}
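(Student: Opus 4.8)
The plan is to reduce the identity to the term-by-term asymptotics already established and then control the interchange of $\lim_{E\to\infty}$ with the infinite sum. Since $\mO^t\mOmega\mO$ is an orthogonal conjugate of a projector (hence a projector, with operator norm $1$) and $\|\mX\|\leq 1$ by the hypothesis $0\leq|\mathbb X_{ii}|\leq 1$, the operator $\chi\,\mO^t\mOmega\mO\,\mX$ has norm at most $\chi<1$. Both inverses therefore exist and admit absolutely convergent von Neumann expansions,
\begin{eqnarray}
\langle (I-\chi\mO^t\mOmega\mO\mX)^{-1}\rangle_{\mO}&=&\sum_{k=0}^\infty \chi^k\,\langle(\mO^t\mOmega\mO\mX)^k\rangle_{\mO},\nonumber\\
(I-\chi\langle\mO^t\mOmega\mO\rangle_{\mO}\mX)^{-1}&=&\sum_{k=0}^\infty \chi^k\Big(\frac{L}{E}\Big)^k\mX^k,\nonumber
\end{eqnarray}
where on the right I used $\langle\mO^t\mOmega\mO\rangle_{\mO}=\mathbb M_1=\frac{\Tr(\mOmega)}{E}\mI=\frac{L}{E}\mI$ from eqn. (\ref{eqapp:firstorder}).

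Next I would match the two series term by term. Applying the tensor rewriting $(\mO^t\mOmega\mO\mX)^k=\Tr_{2\cdots k}\big((\mO^t)^{\otimes k}\mOmega^{\otimes k}\mO^{\otimes k}\mX^{\otimes k}\mathbb S_{2\cdots k}\big)$ and averaging gives $\Tr_{2\cdots k}(\mathbb M_k\mX^{\otimes k}\mathbb S_{2\cdots k})$. The asymptotic Weingarten estimate $W_g^O\to E^{-k}\delta_{\sigma,\tau}$ together with Proposition \ref{lemma:l3}, which isolates the dominant $\tau=e$ channel carrying the factor $\Tr(\mOmega)^k=L^k$, then yields
\begin{eqnarray}
\langle(\mO^t\mOmega\mO\mX)^k\rangle_{\mO}&=&\Big(\frac{L}{E}\Big)^k\mX^k+O\!\Big(\frac{L^{k-1}}{E^k}\Big).\nonumber
\end{eqnarray}
For each fixed $k$ the remainder vanishes as $E\to\infty$, so the $k$-th summand of the left series converges to the $k$-th summand of the right series.

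It remains to exchange the limit with the sum. Because $\|(\mO^t\mOmega\mO\mX)^k\|\leq 1$ and $(L/E)^k\leq 1$ for every $E$, each summand of both series is dominated in operator norm by $\chi^k$, a bound independent of $E$. Given $\varepsilon>0$, choose $N$ with $\sum_{k>N}\chi^k<\varepsilon/3$; this controls both tails uniformly in $E$, while the finite head $\sum_{k\leq N}$ converges by the per-term statement above. This $\varepsilon/3$ argument (equivalently, dominated convergence) establishes the claimed identity.

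The step I expect to be the main obstacle is precisely this interchange: the per-term remainders carry $k$-dependent constants and the number of Brauer pairings grows like $(2k-1)!!$, so one must be sure the subleading channels are suppressed uniformly enough to sum. The robust remedy is the crude norm bound $\|\chi\,\mO^t\mOmega\mO\,\mX\|\leq\chi<1$, which dominates the entire tail independently of $E$ and reduces the convergence to finitely many leading terms, where the asymptotics behind Proposition \ref{lemma:l3} apply directly.
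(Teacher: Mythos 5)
You follow essentially the same route as the paper: a von Neumann expansion, the asymptotic Weingarten formula combined with Proposition \ref{lemma:l3} to isolate the identity pairing, and an exchange of limit and sum. Your treatment of the exchange step is fine, and in fact cleaner than the paper's appeal to the spectral radius: the uniform bound $\|\chi\,\mO^t\mOmega\mO\,\mX\|\leq\chi<1$ does dominate both tails independently of $E$. However, the genuine gap is the step you dispatch in one sentence, namely the per-term estimate
\begin{equation*}
\langle(\mO^t\mOmega\mO\mX)^k\rangle_{\mO}=\Big(\tfrac{L}{E}\Big)^{k}\mX^{k}+O\!\Big(\tfrac{L^{k-1}}{E^{k}}\Big).
\end{equation*}
Proposition \ref{lemma:l3} controls only the scalar factors $\Tr\big(B_{\tau}\mOmega^{\otimes k}\big)$; it says nothing about the operator factors $\Tr_{2\cdots k}\big(\mathbb B_{\rho\tau}\mX^{\otimes k}\big)$ attached to the non-identity pairings. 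Those factors contain closed-loop traces $\Tr(\mX^{m})$, which can be as large as $E$ and exactly compensate the $1/E$ suppression you are counting on. Controlling these loop terms is precisely where the paper's own proof invests its effort (the bound $\Tr(\mathbb G^{r})/E^{r}\leq 1/E^{r-1}$ and the $\bar g^{\,s}$ loop-counting argument); your proposal omits this analysis entirely.

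The failure of the asserted estimate is concrete. At $k=2$ the swap pairing contributes $R_2\,\Tr(\mX)\,\mX$ with $R_2\approx L/E^{2}$, i.e.\ a term of size $\rho\,\bar x\,\mX$ (where $\rho=L/E$ and $\bar x=\Tr(\mX)/E$), which is $O(1)$ rather than $O(L/E^{2})$. Even more sharply, take $\mX=x\mI$ with $0<x\leq 1$: since $\mO^t\mOmega\mO$ is a projector, $(\mO^t\mOmega\mO\mX)^{2}=x^{2}\,\mO^t\mOmega\mO$, so
\begin{equation*}
\langle(\mO^t\mOmega\mO\mX)^{2}\rangle_{\mO}=\frac{L}{E}\,x^{2}\,\mI
\end{equation*}
exactly, for every $E$ (equivalently, from the paper's exact $k=2$ Weingarten matrix one finds $R_1+E\,R_2+R_3=L/E$, the coefficient of $\Tr(\mOmega)^2$ cancelling identically), and not $\tfrac{L^{2}}{E^{2}}x^{2}\mI+o(1)$ as your matching requires. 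Summing the geometric series, the left-hand side of the proposition then tends to $\big(1+\tfrac{\chi x\rho}{1-\chi x}\big)\mI$ while the right-hand side tends to $\tfrac{1}{1-\chi x\rho}\mI$, which differ for every $0<\rho<1$. So the term-by-term convergence on which your proof rests is false, and your (correct) tail bound cannot repair it: any complete argument must confront the loop-trace contributions head on, as the paper's proof attempts to do --- and note that even there the loops with $k-m_0-s=0$ are suppressed only by powers of $\bar g$, not by powers of $E$, so this step is the crux of the matter rather than a detail one may cite away.
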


\begin{proof}
Since both $x_i$ and $g_i$ are constrained to the interval $[0,1]$, such result holds for both $\mathbb G$ and $\mathbb X$. Note that we have, since $g_i\in [0,1]$, $g_i^r\leq g_i$, $\forall r\geq 1$. It follows that \cite{polya,Bernstein}, 
\begin{eqnarray}
    \frac{\Tr(\mathbb G^r)}{E^r}\leq \frac{\Tr(\mathbb G)}{E^r} \leq \frac{\bar g}{E^{r-1}}\leq \frac{1}{E^{r-1}},\label{eqapp:ineq}
\end{eqnarray}
since $\Tr(\mathbb G)\leq E$.

Let us now analyze terms of the form
\begin{eqnarray}
    \Tr_{2\cdots k}\Big(\mathbb B_{\textcolor{red}{\rho\tau}}\mathbb G^{\otimes k}\Big)
\end{eqnarray}
for $\tau\neq e$. This trace can be written in the form
\begin{eqnarray}
    \prod_{j=1}^s\Tr(\mathbb G^{m_{j}})\mathbb G^{m_0}
\end{eqnarray}
where $m_0+\sum_{j=1}^s m_j =k$, where $s$ is the number of closed loops in the Brauer diagram resulting in the partial trace, and $m_j$ is the number of $\mathbb G$ inserted in the loop $j$.
Using eqn. (\ref{eqapp:ineq}), we have
 \begin{eqnarray}
    \prod_{j=1}^s\frac{\Tr(\mathbb G^{m_{j}})}{E^{m_j}}&\leq&  \prod_{j=1}^s\frac{\Tr(\mathbb G)}{E^{m_j}}=\frac{\bar g^s}{E^{\sum_{j=1}^s m_j-s}}=\frac{\bar g^s}{E^{k-m_0-s}}\nonumber 
\end{eqnarray}
Now note that since $m_j\geq 1$, we have $k-m_0-s\geq 0$.  Note that we assume $\bar g< 1$, which is the interesting case, as $\bar g=1$ means that all memristors are at the fixed point, and the system is not evolving.
As a result, we have shown that every loop contribution, for $\bar g<1$ will be sub-dominant with respect to the identity element in the Brauer algebra. Since identity is the only term not containing loops, the result follows in the limit $E\rightarrow \infty$. It then follows that
\begin{eqnarray}
    \lim_{E\rightarrow \infty} \Tr_{2\cdots k}\Big(\mathbb M_k\mathbb G^{\otimes k}\mathbb S_{2\cdots k}\Big)=\frac{L^k}{E^k} \mathbb G^k+O(\frac{c_k}{E^{k+1}}).
\end{eqnarray}
where $c_k$ is a constant.

Now, the question is whether we can swap the series with such limit. Note that the (scalar) series
\begin{eqnarray}
    \frac{1}{1-x}=\sum_{j=1}^\infty x^j
\end{eqnarray}
converges uniformly on $x\in[0,1)$. Similarly, the von Neumann series $(I-A)^{-1}$ converges uniformly if $\forall \lambda\in \Lambda(A)$, $\lambda<1$. Note then that since  $|x_i|\leq 1$ and $\chi<1$, we have $\Lambda(\chi \mO^t  \mOmega\mO \mX)\in R(0,\chi)$, where $R(0,a)$ is the area in the complex plane of radius $a$ and centered in zero.
Using Proposition \ref{lemma:l3}, the contribution due to the $\mOmega$ traces is obtained. Then, at order $k$, the dominant operator contribution is of the form $\mathbb G^k \Tr(\mOmega)^k$. Going to the $\mathbb X$ variables,
we have then obtained the final result
\begin{eqnarray}
    \lim_{E \rightarrow \infty}\langle (I-\chi \mO^t \mOmega \mO \mX)^{-1}\rangle_{\mO} &=&\sum_{k=0}^\infty \lim_{E \rightarrow \infty} \langle (\chi \mO^t  \mOmega\mO \mX)^k\rangle_{\mO}\nonumber \\
    &=& (I-\chi \langle \mO^t \mOmega \mO\rangle_{\mO} \mX)^{-1}\rangle_{\mO}\nonumber
\end{eqnarray}
which proves the proposition.
\end{proof}

\begin{figure}
    \centering
    \includegraphics[scale=0.34]{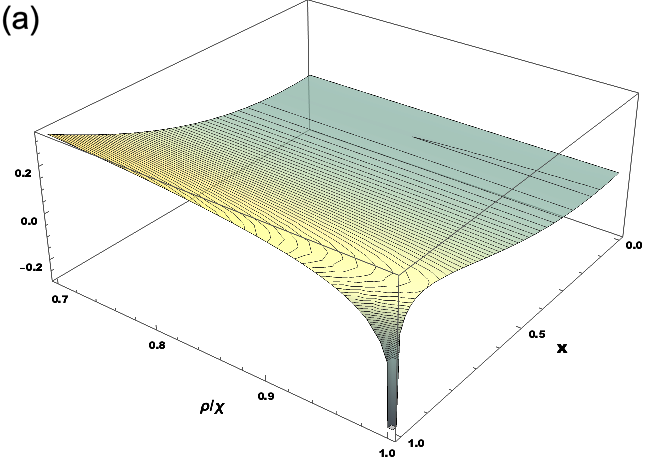}\\
    \includegraphics[scale=0.34]{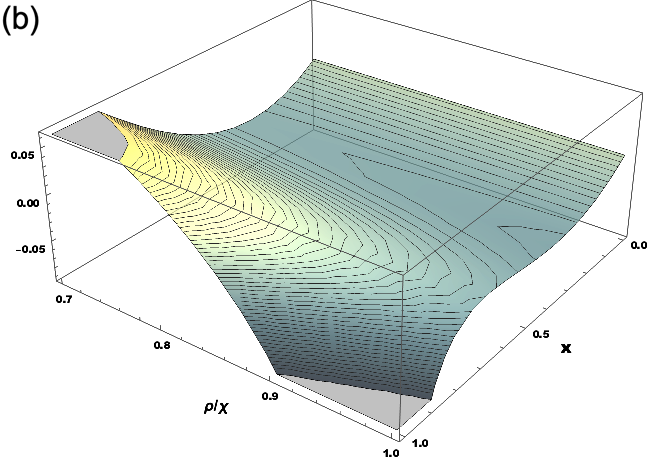}\\\includegraphics[scale=0.34]{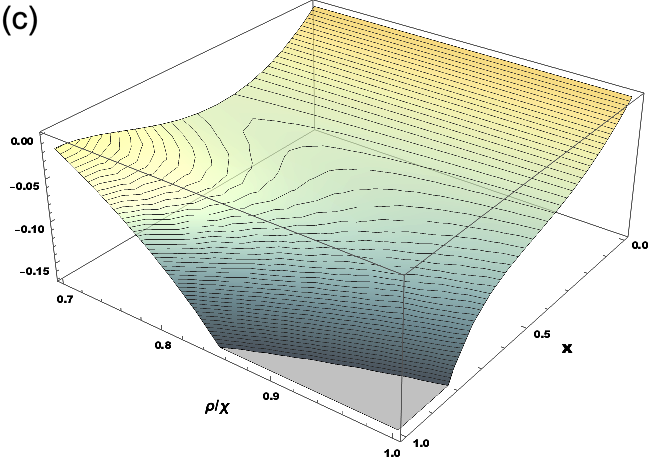}
    \caption{Effective mean field potential as a function of $x$ and $\rho/\chi$ for the values of (a) $s=0.1$ (b) $s=0.23$ and (c) $s=0.3$. As we can see, for values of $\rho/\chi$ above $\approx 0.8$, the potential develops two competing minima as we increase $s$, the effective voltage. For lower values, instead, the single minimum moves smoothly as a function of $s$. }
    \label{fig:effpot}
\end{figure}

We use the proposition above to obtain  the average dynamics over each cycle class $C(E,L)$. We have
\begin{eqnarray}
   \lim_{E\rightarrow \infty} \big \langle \frac{d \vec g}{d\tau}\big\rangle&\approx & \sum_{k=0}^\infty  \chi^ke^{k \tau} \frac{\Tr(\mathbf \mOmega)^k}{E^k} \Tr_{2\cdots k} \big( \mathbb G^{\otimes k}\mathbb S_{2\cdots k}\big) \vec s \nonumber \\
   &=&\sum_{k=0}^\infty  \chi^ke ^{k \tau } \frac{\Tr(\mathbf \mOmega)^k}{E^k} \mathbb G^k\vec s \nonumber \\
   &=&\frac{1}{\mI-e^{\tau} \frac{\Tr(\mathbf \mOmega) \chi}{E} \mathbb G} \vec s
\end{eqnarray}

A result which refines those of \cite{caravelliscience,BCCV} although for particular type of random matrices and only in the large $E$ limit.

Because of this, this implies that each memristive device is decoupled from another one on average, as $\mathbb G$ is a diagonal matrix. We can then analyze the behavior of the system by simply looking independently at each equation, and we get for the $\langle x_i\rangle$ variables that

\begin{eqnarray}
    \frac{d\langle  x_i\rangle}{d\tau}  &=&- \langle x_i\rangle + \frac{s_i}{\alpha \beta} \frac{ 1}{1- \chi \frac{L}{E} \langle x_i\rangle}\\
    &=&-\partial_{\langle x_i\rangle} V(\langle x_i\rangle)
    \label{eq:mft}
\end{eqnarray}
where $L=\Tr(\mathbf \mOmega)<E$ is the number of cycles in the graph. We have that $L=E$ if the number memristors is equal to the number of cycles, and thus if all memristors are completely decoupled. In this case, $\mathbf \mOmega=I$ and in fact equation (\ref{eq:mft}) is exact.
Note at this point that we have the effective mean field potential

\begin{eqnarray}
    V(\langle x_i\rangle)=\frac{1 }{2}\langle x_i\rangle^2+\frac{s_i  }{ 
 \alpha\beta} \frac{E}{\chi L}\log (1-\frac{L \chi}{E}  \langle x_i\rangle) 
 \label{eq:mft2}
\end{eqnarray}

We see the resemblance between eqn. (\ref{eq:mft2}) and eqn. (\ref{eq:meanfielddynp}). The equation is identical as long as we replace $s_i\rightarrow \bar s$ and $\chi \rightarrow \chi \frac{L}{E}$, and thus we can simply analyze the behavior of the effective potential as a function of voltage as we did in previous papers.

Let us now make some comments on the effective value of $\chi$, using eqn. (\ref{eq:conntodegree}) introduced earlier. In the limit $E$ for a connected graph, we obtain that 
\begin{eqnarray}
    \chi_{eff}=\chi (1-\frac{2}{\bar d})
\end{eqnarray}
where $\bar d$ is the average degree.
This result allows us to connect the geometrical properties of the graph to the transition properties. For a connected graph, $\bar d \geq 2$. This implies that $1-\frac{2}{\bar d}\leq 1$. As a result, we have that if $\bar d\propto V$ as in the case of a complete graph, then in the limit $E\rightarrow \infty$ we have $\chi_{eff}\rightarrow \chi$. For planar graphs instead, we have $\bar d\leq 6$ because of Kuratowski's lemma. It follows that 
\begin{eqnarray}
    0 \leq  \frac{\chi{eff}}{\chi}\leq \frac{2}{3}
\end{eqnarray}

The effective mean field potential as a function of $\rho/\chi$ and three values of $s$ are shown in Fig. \ref{fig:effpot}. As we can see, for low value of $s$ the minimum is located at $x=0$ for all values of $\chi$. However, for larger values of $s$ the potential has a minimum at intermediate values of $x$. At large values of $s$, the potential has two competing minima only for values of $\rho/\chi$ above $0.8$. However, for quasi-planar graphs we know that these values are not attainable. This provides an explanation for the absence of first order \text{bulk} induced \cite{caravelliscience} transitions in the effective 2-probe conductance of silver nanowire experiments \cite{AtomicSwitch1,caravelliadp,hochstetteretal}.

\section{Conclusions}

In conclusion, this manuscript utilized techniques from the orthogonal Weingarten calculus to derive the mean field theory of memristive systems, with a specific application in mind of nanowires of low dimensionality. The results obtained shed light on the behavior of quasi two-dimensional systems, revealing that they exhibit only a crossover and not a first-order switching transition in conductance as a bulk phenomenon. This finding provides important insights into the fundamental physics governing memristive behavior in nanowires, and suggests that first order transitions are possibly a \textit{boundary} phenomenon, e.g. driven by avalanches in switching near the boundary for each memristive device. Of course, although our results apply to the case of a simple toy model, the techniques developed here can be used in more realistic case.

Furthermore, the developed technique has broader applicability in the theory of neuromorphic devices, as it draws connections with the physics of brain-like materials. This implies that the derived mean field theory can be extended to understand and potentially design other neuromorphic devices beyond nanowires, opening up new possibilities for the development of advanced electronic devices with brain-inspired functionalities.

The findings presented in this paper contribute to the understanding of memristive systems and their behavior in quasi two-dimensional systems, while also highlighting the broader applicability of the developed mean field theory to the field of neuromorphic devices. These results have the potential to impact the design and development of future electronic devices with applications in areas such as artificial intelligence, cognitive computing, and brain-computer interfaces. Further research and experimental validation of the proposed theory in various material systems are warranted to fully comprehend the potential of this approach in advancing the field of memristive devices and their applications. 
Additionally, the connection between memory and ergodicity in brain-inspired devices is worth noting in the context of the derived mean field theory of memristive systems using cycle classes. The understanding of how memory is encoded and processed in neuromorphic devices is crucial for the development of advanced brain-inspired computing systems \cite{hochstetteretal,memorynanw}. The insights obtained from the developed mean field theory provide valuable information on the role of boundary-induced transitions such as critical avalanches and ergodicity.

The findings suggest that the mean field theory can offer a deeper understanding of the interplay between memory, ergodicity, and the conductance switching behavior in quasi two-dimensional memristive systems  such as nanowire connectomes. This knowledge can be leveraged to design more efficient and reliable neuromorphic devices that mimic the memory processing mechanisms of brain-like devices.

\begin{acknowledgments}
The work of FC was carried out under the auspices of the NNSA of the U.S. DoE at LANL under Contract No. DE-AC52-06NA25396, and in particular support from LDRD via 20230338ER and 20230627ER. We thank Marco Cerezo, Salvatore Olivero,  Lorenzo Leone and Alioscia Hamma for many useful discussions on the Weingarten calculus.
\end{acknowledgments}

\bibliography{bibtex.bib}
\clearpage
\appendix

\section{Brief introduction to tensor products}\label{app:tensprod}
A matrix, or linear operator $M_{ij}$ will be denoted in the following as $M_j^i$, to make clear which indices are input and which indices are output. Matrix multiplication is then denoted by $\sum_j M_{ij} N_{jk}\equiv \sum_j M^i_j N^j_k$, indicating that contractions occur between bottom and upper indices respectively.
Matrix products between matrices, in general, can be written as
\begin{equation}
    (A_1 A_2)_{ij}=\sum_k (A_1)^i_{k} (A_2)^k_{j}=\sum_{kt} (A_1)^i_{k} (A_2)^k_{j}  \delta_{kt}
\end{equation}
The reason for the introduction of this notation will be clear in a moment, when we introduce tensor products.

We can write the identity above in a different way. We introduce the tensor product of the operator $A_1$ and $A_2$, and define
\begin{equation}
    A_1\otimes A_2:\mathbb{R}^E\otimes \mathbb{R}^E \rightarrow \mathbb{R}^E\otimes \mathbb{R}^E.
\end{equation}
Thus, instead of acting on a single copy of $\mathbb{R}^E$ it acts on two copies, $\mathbb{R}^E\otimes \mathbb{R}^E\equiv \mathbb{R}^{d,\otimes 2}$  (we will assume $d$ is the dimensionality of the single linear space from now on) e.g.

\begin{equation}
    (A_1\otimes A_2)(\vec v_1\otimes \vec v_2)=(A_1 \vec v_1)\otimes (A_2 \vec v_2).
\end{equation}
Any linear operator on $\mathcal R^{d,\otimes n}$ can be written as
\begin{eqnarray}
    A^{i_1\cdots i_n}_{j_1 \cdots j_n}&=&\sum_{j_1\cdots j_n,k_1 \cdots k_n=1}^n c^{j_1\cdots j_n}_{k_1\cdots k_n} A_{j_1k_1}\otimes \cdots \otimes A_{j_1 k_n}\nonumber \\
    &=&\sum_{j_1\cdots j_n=1}^{n^2} \tilde c_{j_1\cdots j_n} \tilde A_{j_1}\otimes \cdots \otimes \tilde A_{j_n}\\
\end{eqnarray}
where $\tilde A_{j_i}$ are a basis for $GL(E)$. We used the notation in which the lower index is contracted with the vector's index. Thus
\begin{eqnarray}
    A(\vec v_1\otimes \cdots \otimes \vec v_n)=\sum_{j_1,\cdots,j_n}A^{i_1\cdots i_n}_{j_1 \cdots j_n} (v_1^{j_1}\otimes \cdots \otimes v_n^{j_n})\nonumber 
\end{eqnarray}
As made explicit above, we use the notation for a tensor the bottom indices are input and the top indices are output.

We now introduce the swap operator on $\mathbb{R}^E\otimes \mathbb{R}^E$, $\mathbb{S}$, or $\mathbb{S}_{12}$ to introduce a notation that will be clear in a moment.
The operator $\mathbb S_{12}$ performs the following action

\begin{equation}
    \mathbb S (\vec v_1 \otimes \vec v_2)=\vec v_2 \otimes \vec v_1
\end{equation}

These operations can be written in graphical terms. Tensor products can be written in terms of lines and boxes, as in Fig. \ref{fig:identities}. The last operation we wish to discuss is the partial trace. Consider a tensor $A^{i_1\cdots i_n}_{j_1 \cdots j_n}$ acting on $\mathbb R^{\otimes n}$. A partial trace $\Tr_k(\cdot): GL(E)^{\otimes n}\rightarrow GL(E)^{\otimes n-1}$ is the operation $\Tr_k(A^{i_1\cdots i_n}_{j_1 \cdots j_n})=\sum_n A^{i_1\cdots i_{k-1} n i_{k+1} i_n}_{j_1 \cdots j_{k-1} n j_{k+1} \cdots j_n}$. This is a generalization of the trace of a matrix, in which $\Tr(A_{ij})=\Tr(A^i_{j})=\sum_n A_{nn}$. The partial trace can be generalized to multiple index contractions, depending on the subspace.

These operations are shown in Fig. \ref{fig:identities} in a graphical manner. Tensor products can be represented as lines and boxes, while swap operators simply line inversions. The partial or full trace contracts lines from the left to the equivalent line on the diagram. Contracted lines simply mean summing over the indices of that particular line. For instance, $\Tr(A)$ is shown in Fig. \ref{fig:TrA}.

In the main text, we use the combined swap operator
\begin{eqnarray}
    S_{2\cdots k}=S_{1,2}S_{2,3} S_{3,4}\cdots S_{k-1,k}
\end{eqnarray}
to construct
\begin{eqnarray}
    A^k=\text{Tr}_{2\cdots k}(A^{\otimes k} S_{2\cdots k}).
\end{eqnarray}
Such identity, for $k=4$, is shown graphically in Fig. \ref{fig:at4}.

\begin{figure}[b!]
    \centering
    \includegraphics[scale=0.3]{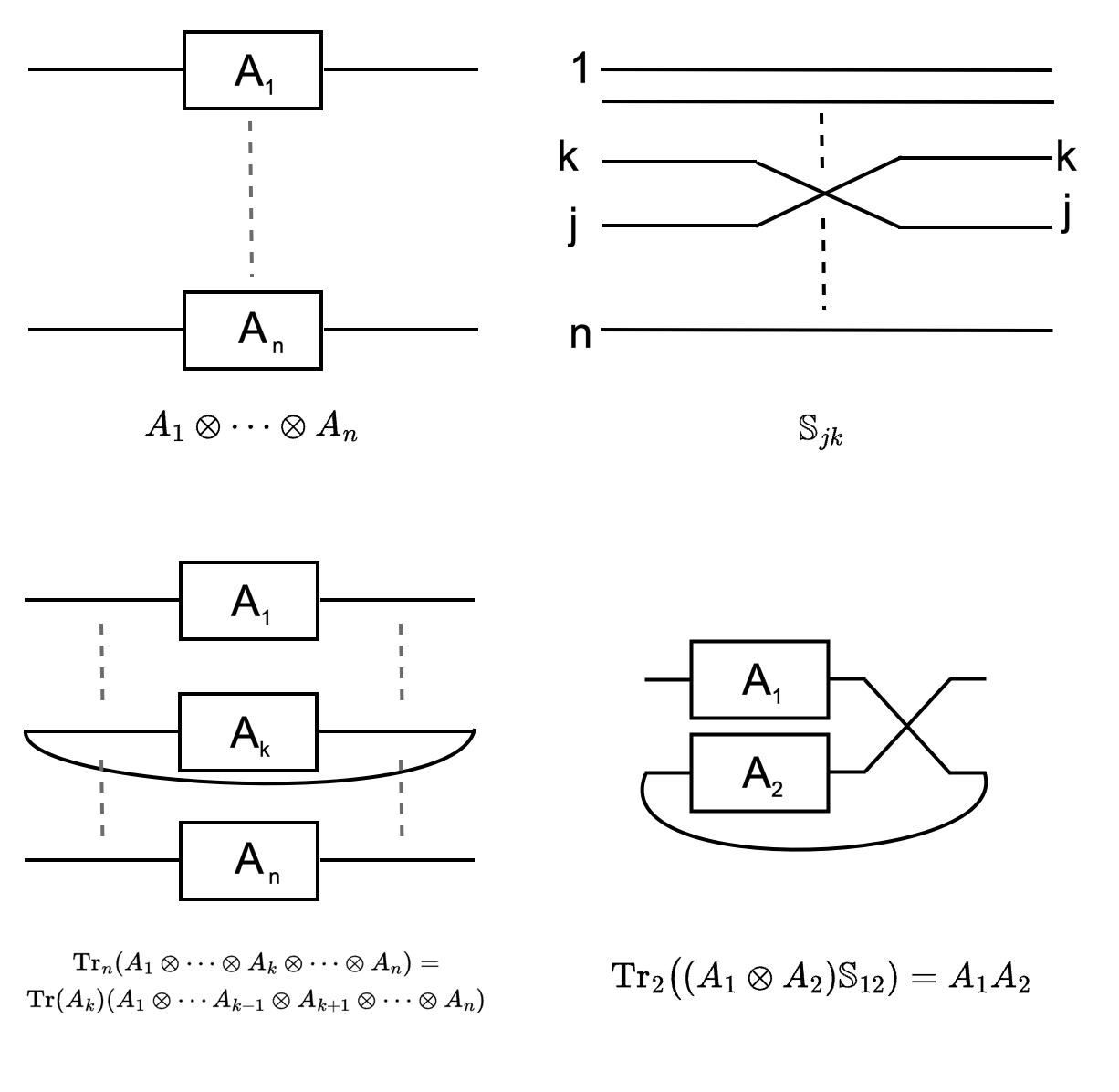}
    \caption{Graphical representations of the identities involving partial traces over tensor products of linear operators, and the swap operator.}
    \label{fig:identities}
\end{figure}

\begin{figure}[b!]
    \centering
    \includegraphics[scale=0.3]{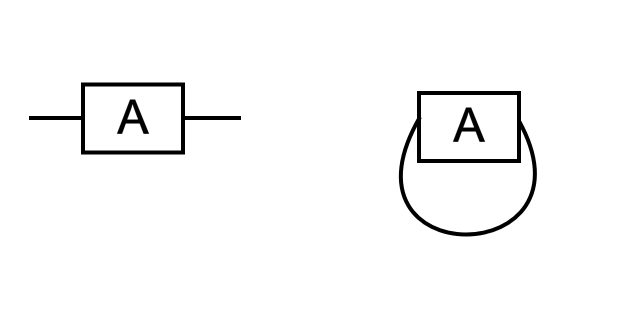}
    \caption{A linear operator $A$ and the graphical representation of its trace.}
    \label{fig:TrA}
\end{figure}


\end{document}